\newtheorem{corollary}{Corollary}[section]
\newtheorem{proposition}{Proposition}[section]
\newtheorem{theorem}{Theorem}[section]
\newtheorem{lemma}[theorem]{Lemma}
\theoremstyle{definition}
\newtheorem{definition}[theorem]{Definition}
\newtheorem{example}[theorem]{Example}
\theoremstyle{remark}
\newtheorem{remark}[theorem]{Remark}
\numberwithin{equation}{section}
\newcommand{\R}{\mathbb{R}}
\newcommand{\Z}{\mathbb{Z}}
\newcommand{\N}{\mathbb{N}}
\title[ ]{System identification in Dynamical Sampling}
\author{Sui Tang}
\address{Department of Mathematics}
\curraddr{Vanderbilt University, Nashville, TN, 37240}
\email{sui.tang@vanderbilt.edu}
\thanks{2010 Mathematics Subject Classification: Primary 94A20 , 94A12, 42C15, 15A29.\\
\indent Key words. Discrete Fourier Analysis, Distributed sampling, reconstruction, channel\\ \indent estimation.\\
 \indent The research of this work is supported by NSF Grant  DMS-1322099.}
\begin{document}
\maketitle
\begin{abstract}
We consider the problem of spatiotemporal sampling in a discrete infinite dimensional spatially invariant evolutionary process $x^{(n)}=A^nx$ to recover an unknown convolution operator $A$ given by a filter $a \in \ell^1(\Z)$ and an unknown initial state $x$ modeled as  avector in $\ell^2(\Z)$. Traditionally, under appropriate hypotheses, any $x$ can be recovered from its samples on $\Z$ and $A$ can be recovered by the classical techniques of deconvolution. In this paper, we will exploit the spatiotemporal correlation and propose a new spatiotemporal sampling scheme to recover $A$ and $x$ that allows to sample the evolving states $x,Ax, \cdots, A^{N-1}x$ on a sub-lattice of $\Z$, and thus achieve the spatiotemporal trade off. The spatiotemporal trade off is motivated by several industrial applications \cite{Lv09}. Specifically, we show that $\{x(m\Z), Ax(m\Z),\\ \cdots,  A^{N-1}x(m\Z): N \geq 2m\}$ contains enough information to recover a typical ``low pass filter" $a$ and $x$ almost surely, in which we generalize the idea of the finite dimensional case in  \cite{AK14}. In particular, we provide an algorithm based on a generalized Prony method for the case when both $a$ and $x$ are of finite impulse response and an upper bound of their support is known. We also perform the perturbation analysis based on the spectral properties of the operator $A$ and initial state $x$, and verify them by several numerical experiments. Finally, we provide several other numerical methods to stabilize the method and numerical example shows the improvement. \\

\noindent
\\

\end{abstract}

\section{Introduction}
\subsection{The dynamical sampling problem}
In situations of practical interest, physical systems evolve in time under the action of well studied operators, one common example is provided by the diffusion processes. Sampling of such an evolving system can be done by sensors or measurement devices that are placed at various locations and can be activated at different times. In practice, increasing the spatial sampling density is usually much more expensive than increasing the temporal sampling rate (\cite{Lv09}).  Given the different costs associated
with spatial and temporal sampling,  we aim to reconstruct any states in the evolutionary process using as few sensors as possible, but allow one to take samples at different time levels. Besides, in some cases, obtaining samples at a sufficient rate at any single time level may not even be possible, spatiotemporal sampling may resolve this issue by oversampling in time.  A natural question is whether one can compensate for insufficient spatial sampling densities by oversampling in time. This setting has departed from the classical sampling theory, where the samples are taken simultaneously at only one time level, see \cite{BAC12, QS08, P11, M11, RK12, JG12, AK01, DMQ09, QSy08}. Dynamical sampling is a newly proposed mathematical sampling framework. It involves studying the time-space patterns formed by the locations of the measurement devices and the times of their activation. Mathematically speaking, suppose $x$ is an initial distribution that is evolving in time satisfying the evolution rule: $$x_t=A_tx$$ where $\{A_t\}_{t \in [0, \infty)}$ is a family of evolution operators satisfying the condition $A_0=I$. Dynamical sampling asks the question: when do coarse samplings taken at varying times $\{ x|_{\Omega_0},(A_{t_1} x)|_{\Omega_1},\dots, (A_{t_N} x)|_{\Omega_N}\}$ contain the same information as a finer sampling taken at the earliest time?  One goal of dynamical sampling is to find all spatiotemporal sampling sets $(\chi,\tau)=\{\Omega_t, t \in \tau\}$ such that the certain classes of signals $x$ can be recovered from the spatiotemporal samples $x_t(\Omega_t), t\in \tau$. In the above cases, the evolution operators are assumed to be known. It has been well-studied in the context of various  evolutionary systems in a very general setting, see \cite{AUCS14, AS14, ADK13,AADP13, ADK12}. 

Another important problem arises when the evolution operators are themselves unknown or partially known. In this case, we are interested in finding all spatiotemporal sampling sets and certain classes of evolution operators so that the family $\{A_t\}_{t \in [0, \infty)}$  or their spectrum  and even the unknown states can be identified. We call such a problem the unsupervised system identification problem in dynamical sampling.

Applications to Wireless Sensor Networks (WSN) is a natural setting for Dynamical sampling. 
WSN are widely used in many industries, including the health, military, and environmental
industries, see \cite{WSN} for numerious examples. In WSN, a huge amount of sensors are 
distributed to monitor a physical field such as the pollution, temperature or pressure. The goal is to 
exploit the evolutionary structure and the placement of sensors to reconstruct an
unknown field. However, the current approaches and algorithms do not make use of 
the evolutionary structure of the field, see \cite{GGK12, GG09, GG09, GG092}. It is not always possible to place sampling devices
at the desired locations. Then we may be able to recover the desired information by
placing the sensors elsewhere and use the evolution process to recover the signals at
the relevant locations. In addition, dynamical sampling will make the reconstruction cheaper since we
use a reduced numer of sensors.

\subsection{Problem Statement} In this subsection, we state an instance of the unsupervised system identification problem of dynamical sampling in an infinite dimensional setting. Let $x \in \ell^2(\Z)$  be an unknown initial spatial signal and the evolution operator $A$ be given by an unknown convolution filter $a \in \ell^1(\Z)$ such that $Ax=a*x$. At time $t=n\in \N$, the signal $x$ evolves to be $x_n=A^nx=a^n*x$, where $a^n=a*a\cdots *a$.  We call this evolutionary system spatially invariant. Given the spatiotemporal samples with both $x$ and $A$ unknown, we would like to recover as much information about them as we can under the given various priors. Here we first study the case of uniform subsampling. Without loss of generality, we assume a positive odd integer $m$ ($m >1$) to be the uniform subsampling factor. At time level $t=l$, we uniformly undersample the evolving state $A^lx$ and get the spatiotemporal data
\begin{equation}\label{eq8.4}
y_l=(a^l*x)(m\Z),
\end{equation}
which is a sequence in $\ell^2(\Z)$.  It is obvious that at any single time level $t=l$, we can not determine the state $A^lx$ from the measurement $y_l.$  The problem we are going to consider can be summarized as follows:  
\vspace{2mm}

    \textit{Under what conditions on $a, m, N$ and  $x$, can $a$ and $x$ be recovered from the spatiotemporal samples $\{y_l: l=0,\cdots,N-1\}$, or equivalently, from the set of measurement sequences $\{x(m\Z),(a*x)(m\Z), \cdots, (a^{N-1}*x)(m\Z) \}$}? 
\vspace{2mm}

In \cite{AK14}, Aldroubi and Krishtal consider the recovery of an unknown $d \times d$ matrix $B$ and an unknown initial state $x \in \ell^2(\Z_d)$ from coarse spatial samples of its successive states $\{B^kx, k=0,1,\cdots\}$. Given an initial sampling set $\Omega \subset \Z_d=\{1,2,\cdots,d\}$,  they employ techniques related to Krylov subspace methods to show how large $l_i$ should be to recover all the eigenvalues of $B$ that can possibly be recovered from spatiotemporal samples $\{B^kx(i): i\in  \Omega,k=0,1,\cdots,l_i-1\}$. Our setup is very similar to the special case of regular invariant dynamical sampling problem in \cite{AK14}. In this special case, they employ a generalization of the well known Prony method that uses these regular undersampled spatiotemporal data first for the recovery of the filter $a$.  Then by using techniques developed in \cite{ADK12},  they show how to recover the initial state from these spatiotemporal samples. In this paper, we will address the infinite dimensional analog of this special case and provide more algorithms. In \cite{TG13}, Peter and Plonka use a generalized Prony method to reconstruct the sparse sums of the eigenfunctions of some known linear operators. Our generalization of Prony method shares some similar spirits with it, but deals with a fundamentally different problem.  In Sparse Fourier Transformation, see \cite{APML, MI10, MI13, DYA13}, the idea is to uniformly undersample the fixed signal with different factors so that one can group subsets of Fourier space together into a small number of bins to isolate frequencies, then take an Aliasing-Based Search by Chinese Remainder Theorem so that one can recover the coefficients and the frequencies. In our case, intuitively, one can think of recovering of the shape of an evolving wave by observing the  amplitude of its aliasing version at fixed coarse locations over a long period of time as opposed to acquiring all of the amplitudes at once, then by the given priors, one can achieve the perfect reconstructions.  Other similar work include the the Slepian-Wolf distributed source coding problem \cite{dscp} and the distributed sampling problem in \cite{HRLV10}.  Our problem, however, is very different from the above in the nature of the processes we study. Distributed sampling problem typically deals with two signals correlated by a transmission channel. We, on the other hand, can observe an evolution process at several instances and over longer periods of time. 

\subsection{Notations}

In the following, we use standard notations. By $\mathbb{N}$, we denote the set of all positive integers. The linear space of all column vectors with $M$ complex components is denoted by $\mathbb{C}^M$. The linear space of all complex $M \times N$ matrices is denoted by $\mathbb{C}^{M\times N}.$ For a matrix $\mathbf{A}_{M,N}=(a_{ij}) \in \mathbb{C}^{M\times N}$, its transpose is denoted by $\mathbf{A}^T_{M,N}$, its conjugate-transpose by $\mathbf{A}^*_{M,N},$ and its Moore-Penrose pseudoinverse by $\mathbf{A}^+_{M,N}.$ A square matrix $\mathbf{A}_{M,M}$ is abbreviated to $\mathbf{A}_{M}$. Its infinity norm is defined by  $$\lvert\lvert \mathbf{A}_{M} \rvert\vert_{\infty}=\max\limits_{1\leq i\leq M}(\sum\limits_{j=1}^M \lvert a_{ij} \rvert).$$ For a vector $\mathbf{z}=(z_i) \in \mathbb{C}^M$, the $M \times M$ diagonal matrix built from $\mathbf{z}$ is denoted by $diag(\mathbf{z})$. We define the infinity norm  $\lvert\lvert \mathbf{z} \rvert\rvert_{\infty}=\max\limits_{i=1,\cdots,M} \lvert z_i \rvert$.  It is easy to see that 
$$\lvert\lvert \mathbf{A}_{M} \rvert\vert_{\infty}=\max\limits_{\mathbf{z}\in \mathbb{C}^M, \lvert\lvert \mathbf{z} \rvert\rvert_{\infty}=1} \lvert\lvert  \mathbf{A}_{M} \mathbf{z} \rvert\rvert_{\infty}. $$ Further we use the known submatrix notation coincides with MATLAB. For example, $\mathbf{A}_{M,M+1}(1 : M, 2 : M + 1)$ is the submatrix of $\mathbf{A}_{M,M+1}$ obtained by extracting rows 1 through $M$ and columns 2 through $M + 1$, and $\mathbf{A}_{M,M+1}(1 : M, M + 1)$ means the last column vector of $\mathbf{A}_{M,M+1}$.

\begin{definition} The minimal annihilating polynomial of a square matrix $\mathbf{A}_M$ is $p^{\mathbf{A}_M}[z]$, if it is the monic polynomial of smallest degree among all the monic polynomials $p$
such that $p(\mathbf{A}_M) = 0$. We will denote the degree of $p^{\mathbf{A}_M}[z]$ by $deg(p^{\mathbf{A}_M})$. 
\end{definition}

Let  the monic polynomial $p[z]=\sum\limits_{k=0}^{M-1}p_kz^k+z^M$, the companion matrix of $p[z]$ is defined by $$\mathbf{C}^{p[z]}=\left( \begin{array}{ccccc} 
    0&0 & \cdots&0&-p_0\\
     1& 0 &\cdots&0&-p_1 \\
   0& 1 & \cdots&0 &-p_2\\
\vdots&\vdots&\ddots&\vdots&\vdots\\
 0& 0& \cdots&1& -p_{M-1}
   \end{array}
   \right).  $$

\begin{definition} Let $w_1,w_2,\cdots, w_n$ be $n$ distinct complex numbers, denote $\mathbf{w}=[w_1,\cdots,w_n]^T$,  the $n \times N$ Vandermonde matrix generated by $\mathbf{w}$ is defined by 
\begin{equation}\label{V}
\mathbf{V}_{n,N}(\mathbf{w})=\left( \begin{array}{cccc} 
    1 &w_1 & \cdots& w_1^{N-1}\\
     1& w_2 &\cdots& w_2^{N-1} \\
   \vdots& \vdots & \ddots & \vdots\\
  1& w_n& \cdots& w_n^{N-1}
   \end{array}
   \right).\\
\end{equation}  

\end{definition}

\begin{definition} For a sequence $c=(c_n)_{n \in \Z} \in \ell^1(\Z) \text{ or } \ell^2(\Z)$, we define its Fourier transformation to be the function on the Torus $\mathbb{T}=[0,1)$
$$\hat c(\xi)=\sum_{n\in \Z} c_n e^{-2\pi i n \xi}, \xi \in \mathbb{T}.$$
\end{definition}


The remainder of the paper is organized as follows: In section \ref{noisefree}, we discuss the noise free case. From a theoretical aspect, we show that we can reconstruct a ``typical low pass filter'' $a$ and the initial signal $x$ from the dynamical spatiotemporal samples $\{y_l\}_{l=0}^{N-1}$ almost surely, provided $N \geq 2m$. For the case when both $a$ and $x$ are of finite impulse response and an upper bound of their support is known, we propose a Generalized Prony Method algorithm to recover the Fourier spectrum of $a$. In section \ref{stabilityanalysis}, we provide a perturbation analysis of this algorithm. The estimation results are formulated in the rigid $\ell^{\infty}$ norm and give us an idea of how the performance depends on the system parameters $a,x$ and $m$. In section \ref{ne},  we do several numerical experiments to verify some estimation results. In section \ref{othernumericalmethods}, we propose several other algorithms such as Generalized Matrix Pencil method, Generalized ESPRIT Method  and Cadzow Denoising methods to improve the effectiveness and robustness of recovery. The comparison between algorithms is illustrated by a numerical example in section \ref{numericalexamples}.  Finally,  we summarize the work in section \ref{conclusion}.

\section{Noise-free recovery}
\label{noisefree}We consider the recovery of a frequently encountered case in applications when the filter $a \in \ell^1(\Z)$ is a ``typical low pass filter" so that $\hat a(\xi)$ is  real, symmetric and strictly decreasing on $ [0,\frac{1}{2}]$. An example of such a typical low pass filter is shown in Figure 1. The symmetry reflects the fact that there is often no preferential direction for physical kernels and monotonicity is a reflection of energy dissipation. 
\begin{figure}[htbp]
\label{lp}
 \centering
    \includegraphics[width=0.5\textwidth]{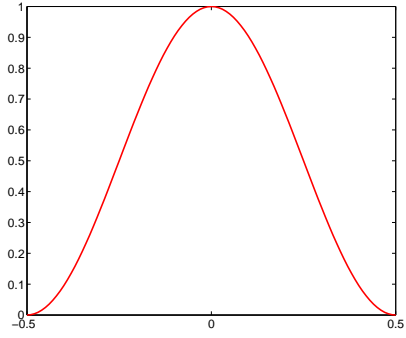}
 \caption{A Typical Low Pass Filter}
\end{figure}Without loss of generality, we also assume $a$ is a normalized filter, i.e., $\lvert \hat a(\xi) \rvert \leq 1, \hat a(0)=1$. In this section, we assume the spatiotemporal data $y_l=(a^l*x)(m\Z)$ is exact. Define the downsampling operator $S_m: \ell^2(\Z) \rightarrow \ell^2(\Z)$ by 
$$(S_m x)(k)=x(mk), k \in \Z,$$
then $y_l=S_m(a^l*x).$ Due to the Poisson Summation formula and the convolution theorem, we have the Lemma below for the downsampling operator. 
\begin{lemma}\label{PSF}The Fourier transform of each measurement sequence ${y}_l=S_m(a^l*x)$ at $\xi  \in \mathbb{T}$ is 
\begin{equation} \label{equmatrixprod}
 \hat y_l(\xi)=\frac{1}{m}\sum\limits_{ i=0 }^{m-1}   \hat{a}^l (\frac{\xi   + i}{m})\hat{x}(\frac{\xi  + i}{m}).   \end{equation}\label{timeseries}
\end{lemma}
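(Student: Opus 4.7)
The plan is to prove the standard aliasing/downsampling identity by combining three ingredients: the orthogonality of characters of $\Z/m\Z$, the convolution theorem on $\Z$, and a density argument to handle convergence.

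First I would set $z_l := a^l * x$. Since $a \in \ell^1(\Z)$ we have $a^l \in \ell^1(\Z)$, so by Young's inequality $z_l = a^l * x \in \ell^2(\Z)$, and its Fourier transform satisfies $\hat z_l(\xi) = \hat a^l(\xi)\,\hat x(\xi)$ by the convolution theorem. Also $y_l = S_m z_l \in \ell^2(\Z)$ since downsampling is a bounded map on $\ell^2$, so all Fourier transforms appearing in the statement make sense as elements of $L^2(\TT)$.

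Next I would use the finite geometric-sum identity
\begin{equation*}
\mathbbm{1}_{m\Z}(n) \;=\; \frac{1}{m}\sum_{i=0}^{m-1} e^{-2\pi i n i/m}, \qquad n\in\Z,
\end{equation*}
which is just the orthogonality relation for the characters of $\Z/m\Z$. Assuming for the moment that $z_l$ is finitely supported so all sums are finite, I would write
\begin{equation*}
\hat y_l(\xi) \;=\; \sum_{k\in\Z} z_l(mk)\, e^{-2\pi i k\xi}
\;=\; \sum_{n\in\Z} z_l(n)\,\mathbbm{1}_{m\Z}(n)\, e^{-2\pi i n\xi/m},
\end{equation*}
substitute the character identity for $\mathbbm{1}_{m\Z}(n)$, and interchange the two finite sums to obtain
\begin{equation*}
\hat y_l(\xi) \;=\; \frac{1}{m}\sum_{i=0}^{m-1}\sum_{n\in\Z} z_l(n)\, e^{-2\pi i n(\xi+i)/m}
\;=\; \frac{1}{m}\sum_{i=0}^{m-1} \hat z_l\!\left(\frac{\xi+i}{m}\right).
\end{equation*}
Applying the convolution theorem $\hat z_l = \hat a^l\,\hat x$ then yields the claimed formula.

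To finish, I would remove the finite-support assumption by a density argument: the map $x \mapsto \hat y_l$ (with $a$ fixed) and the map $x \mapsto \frac{1}{m}\sum_{i=0}^{m-1}\hat a^l(\tfrac{\cdot+i}{m})\,\hat x(\tfrac{\cdot+i}{m})$ are both bounded linear operators from $\ell^2(\Z)$ into $L^2(\TT)$ (using that $\hat a \in L^\infty(\TT)$ because $a \in \ell^1$), and they agree on the dense subspace of finitely supported sequences, hence everywhere. I expect no genuine obstacle here; the only subtlety is this last justification of the swap of summation, which the density argument dispatches cleanly.
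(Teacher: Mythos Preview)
Your argument is correct and is essentially the approach the paper has in mind: the paper does not give a detailed proof but simply attributes the identity to ``the Poisson Summation formula and the convolution theorem,'' and your orthogonality-of-characters computation is precisely the discrete Poisson summation step, followed by the convolution theorem, with a clean density argument to justify the interchange of sums. The only cosmetic issue is the collision between $i$ as the imaginary unit and $i$ as the summation index in your character identity; renaming the index would avoid ambiguity.
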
Let $N$ be an integer satisfying $N \geq 2m$, we define the $(N-m)\times 1$ column vector
\begin{equation}
\mathbf{h}_{t}(\xi)= [\hat {y}_t(\xi), \hat {y}_{t+1}(\xi),\cdots, \hat {y}_{N-m+t-1}(\xi)  ]^{T},
\end{equation}
and build the Hankel matrices

 \begin{equation}\label{hankelM}
\mathbf{H}_{N-m,m}^\xi(0)=\big[ \mathbf{h}_0(\xi), \mathbf{h}_1(\xi), \cdots, \mathbf{h}_{m-1}(\xi) \big],
\end{equation}
\[\mathbf{H}_{N-m,m}^\xi(1)=\big[ \mathbf{h}_1(\xi), \mathbf{h}_1(\xi), \cdots, \mathbf{h}_{m}(\xi) \big]. \]

For $\xi \in \mathbb{T}$, we introduce the notations $\mathbf{x}(\xi)=[\hat x(\frac{\xi}{m}),\cdots,\hat x(\frac{\xi+m-1}{m})]^T$ and $\mathbf{w(\xi)}=[\hat a(\frac{\xi}{m}),\cdots, \hat a(\frac{\xi+m-1}{m})]^T$. 

\begin{proposition}\label{hankelrecovery}
Let  $N$ be an integer satisfying $N \geq 2m$.

\begin{enumerate}
\item Then the rectangular Hankel matrices can be factorized in the following form:
\begin{equation}\label{decompose}
m\mathbf{H}_{N-m,m}^\xi(s)=\mathbf{V}^T_{m,N-m}(\mathbf{w}(\xi))diag(\mathbf{x}(\xi))diag(\mathbf{w(\xi)})^s\mathbf{V}_{m}(\mathbf{w}(\xi)),
\end{equation}where $s=0,1$. The Vandermonde matrix $\mathbf{V}_{m,N-m}(\mathbf{w}(\xi))$ and $\mathbf{V}_{m}(\mathbf{w}(\xi))$ are given in the way as indicated in Definition \ref{V}. 
\vspace{2mm}

\item Assume the entries of $\mathbf{x}(\xi)$ are all nonzero. The rank of the Hankel matrix $\mathbf{H}_{N-m,m}^\xi(0)$ can be summarized as follows:
\[
\text{Rank }\mathbf{H}_{N-m,m}^\xi(0)=
 \begin{cases}
m &\mbox{ if $\xi \neq 0 \text{ or } \frac{1}{2}$},\\
\frac{m+1}{2}&\mbox{otherwise}.
 \end{cases} \]

\item Assume the entries of $\mathbf{x}(\xi)$ are all nonzero. For $\xi \neq 0,\frac{1}{2}$, the vector $\mathbf{q}(\xi)=[q_0(\xi),\cdots,q_{m-1}(\xi)]^T$ is the unique solution of the linear system
\begin{equation}\label{pe}
\mathbf{H}_{N-m,m}^{\xi}(0)\mathbf{q}(\xi)=-\mathbf{h}_m{(\xi)}
\end{equation}if and only if the polynomial
\begin{equation}
q^{\xi}[z]=\sum\limits_{k=0}^{m-1}q_k(\xi)z^k+z^m 
\end{equation}with  coefficients given by $\mathbf{q}(\xi)$ is the minimal  annihilating polynomial of
the diagonal matrix $diag(\mathbf{w}(\xi))$.  In other words, the polynomial $q^{\xi}[z]$ has all  $\hat a(\frac{\xi+i}{m})\in \R\text{ }( i=0,\cdots,m-1)$ as roots.  Moreover,  if $p[z]$ is a monic polynomial of degree $m$, then 
\begin{equation}\label{companion}
\mathbf{H}_{N-m,m}^\xi(0)\mathbf{C}^{p[z]}=\mathbf{H}_{N-m,m}^\xi(1)
\end{equation}
if and only if $p[z]$ is the minimal annihilating polynomial of $diag(\mathbf{w}(\xi))$. 

\end{enumerate}

\end{proposition}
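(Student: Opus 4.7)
The whole proposition rests on the explicit formula in Lemma~\ref{PSF}, which I would rewrite as
\begin{equation*}
m\hat y_l(\xi)=\sum_{i=0}^{m-1} x_i\,w_i^l,\qquad w_i:=\hat a\bigl(\tfrac{\xi+i}{m}\bigr),\quad x_i:=\hat x\bigl(\tfrac{\xi+i}{m}\bigr).
\end{equation*}
Read in this language, every piece of the proposition is a Vandermonde/Hankel bookkeeping statement. From the formula, the $l$-th entry of $m\mathbf h_k(\xi)$ is $\sum_i x_i w_i^{k+l}$, which matches the $l$-th entry of $\mathbf V^T_{m,N-m}(\mathbf w(\xi))\,\mathrm{diag}(\mathbf w(\xi))^k\,\mathbf x(\xi)$.

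For Part (1), I would just match $(j,k)$-entries of~(\ref{decompose}). The left side is $m\hat y_{s+j+k}(\xi)=\sum_i x_i w_i^{s+j+k}$. On the right, $(\mathbf V^T)_{j,i}=w_i^j$, $\mathrm{diag}(\mathbf x)_{ii}=x_i$, $\mathrm{diag}(\mathbf w)^s_{ii}=w_i^s$, $(\mathbf V)_{i,k}=w_i^k$, which multiply to exactly the same sum. So~(\ref{decompose}) is a direct rewrite of Lemma~\ref{PSF}.

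For Part (2), I would analyze when the $w_i$ collide. Since $\hat a$ is real, even about $0$, and strictly monotone on $[0,\tfrac12]$, one has $\hat a(\alpha)=\hat a(\beta)$ for $\alpha,\beta\in[0,1)$ iff $\alpha=\beta$ or $\alpha+\beta=1$. With $\alpha=\tfrac{\xi+i}{m}$ and $\beta=\tfrac{\xi+j}{m}$, the second case reads $2\xi+i+j=m$. For $\xi\notin\{0,\tfrac12\}$ and $m$ odd, $2\xi\notin\Z$, so no coincidence occurs; both Vandermonde factors in~(\ref{decompose}) have rank $m$, $\mathrm{diag}(\mathbf x(\xi))$ is invertible, and the rank of $\mathbf H^\xi_{N-m,m}(0)$ is $m$. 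For $\xi\in\{0,\tfrac12\}$, a direct index count using the oddness of $m$ yields exactly $(m-1)/2$ coincident pairs plus one unpaired index, so $\mathbf w(\xi)$ has $(m+1)/2$ distinct entries; factoring each Vandermonde through these distinct nodes and absorbing the multiplicities collapses $\mathrm{diag}(\mathbf x(\xi))$ into an $(m+1)/2\times(m+1)/2$ diagonal whose entries are either a single $x_i$ or a pairwise sum $x_i+x_{i'}$, yielding rank $(m+1)/2$.

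For Part (3), I would rewrite $\mathbf H_{N-m,m}^\xi(0)\mathbf q(\xi)+\mathbf h_m(\xi)=\mathbf 0$ as $\sum_{k=0}^m q_k(\xi)\mathbf h_k(\xi)=\mathbf 0$ with $q_m(\xi)=1$, which by the identity in the first paragraph equals
\begin{equation*}
\tfrac1m\,\mathbf V^T_{m,N-m}(\mathbf w(\xi))\,\mathrm{diag}\bigl(q^\xi[w_0],\dots,q^\xi[w_{m-1}]\bigr)\,\mathbf x(\xi).
\end{equation*}
For $\xi\notin\{0,\tfrac12\}$ the matrix $\mathbf V^T_{m,N-m}(\mathbf w(\xi))$ has trivial kernel by Part (2), and every $x_i\neq 0$, so this vanishes iff $q^\xi[w_i]=0$ for every $i$; since $q^\xi[z]$ is monic of degree $m$ and the $w_i$ are $m$ distinct numbers, this forces $q^\xi[z]=\prod_i(z-w_i)$, the minimal annihilating polynomial of $\mathrm{diag}(\mathbf w(\xi))$. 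Uniqueness of the solution is the full column rank given by Part (2). For~(\ref{companion}), the shift structure of $\mathbf C^{p[z]}$ makes the first $m-1$ columns of $\mathbf H_{N-m,m}^\xi(0)\mathbf C^{p[z]}$ automatically equal to $\mathbf h_1(\xi),\dots,\mathbf h_{m-1}(\xi)$, and the last column equals $\mathbf h_m(\xi)$ iff $\mathbf H_{N-m,m}^\xi(0)\mathbf p=-\mathbf h_m(\xi)$, reducing~(\ref{companion}) to the equivalence just proved. The main obstacle I foresee is the rank count at $\xi\in\{0,\tfrac12\}$: one has to combine the symmetry of $\hat a$ with the oddness of $m$ to count distinct nodes exactly, and ensure that the pairwise sums of $x_i$ at collided indices do not accidentally vanish, which seems to require a genericity assumption beyond nonvanishing of the individual entries of $\mathbf x(\xi)$.
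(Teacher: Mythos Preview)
Your proof is correct and follows the same route as the paper: entrywise verification of~(\ref{decompose}) from Lemma~\ref{PSF}, rank read off the Vandermonde factorization, and the Prony equivalence in Part~(3) obtained by pushing the linear system through the factorization and invoking full column rank of $\mathbf V^T_{m,N-m}(\mathbf w(\xi))$ together with nonvanishing of the $x_i$; the companion-matrix statement is likewise dispatched in the paper by noting it is a reformulation of~(\ref{pe}), exactly as you do. Regarding the obstacle you flag at $\xi\in\{0,\tfrac12\}$: the paper does not address it and simply reads the rank off the Vandermonde factors plus invertibility of $\mathrm{diag}(\mathbf x(\xi))$, but your concern is legitimate---after collapsing to the $(m+1)/2$ distinct nodes the effective diagonal indeed carries the pairwise sums $x_i+x_{i'}$, which can vanish even when every individual $x_i\neq 0$, so the rank value $(m+1)/2$ at those two points is only generically correct; since Theorem~\ref{thm8.3} and Algorithm~\ref{algo:prony} exclude $\xi=0,\tfrac12$ anyway, this gap does not affect the paper's main results.
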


\begin{proof}(1) By Lemma \ref{PSF}, for $t=0,\cdots,m-1$, we have the identity:

  $$m\mathbf{h}_t(\xi)=\mathbf{V}_{m,N-m}^T(\mathbf{w}(\xi))diag(\mathbf{x}(\xi))\mathbf{V}_{m}(\mathbf{w}(\xi))(:,t+1).$$
Hence the first identity follows by the definition of $\mathbf{H}_{N-m,m}^\xi(0)$. Notice that for $t\geq 1$, 
$$m \mathbf{h}_t(\xi)=\mathbf{V}_{m,N-m}^T(\mathbf{w}(\xi))diag(\mathbf{x}(\xi))diag(\mathbf{w}(\xi))\mathbf{V}_{m}(\mathbf{w}(\xi))(:,t),$$ the second identity follows similarly. \\
(2) By the symmetric and monotonicity condition of $\hat a$ on $\mathbb{T}$, we have
\begin{equation}
\text{Rank } V_{m}(\mathbf{w}(\xi))=
 \begin{cases}
m &\mbox{ if $\xi \neq 0 \text{ or } \frac{1}{2}$},\\
\frac{m+1}{2}&\mbox{otherwise}.
 \end{cases} 
\end{equation}
Since $N \geq 2m$, $\text{Rank } \mathbf{V}_{m}(\mathbf{w}(\xi))=\text{Rank } \mathbf{V}^T_{m,N-m}(\mathbf{w}(\xi))$. By our assumptions, $diag(\mathbf{x}(\xi))$ is invertible. The rank of Hankel matrix $\mathbf{H}_{N-m,m}^\xi(0)$ can be computed by its factorization results in (1).\\
(3) If $\xi \neq 0 \text{ or }\frac{1}{2}$, then the diagonal matrix $diag(\mathbf{w}(\xi))$ has $m$ distinct eigenvalues consist of $\{\hat a(\frac{\xi+i}{m}): i=0,\cdots,m-1\}.$ The minimal annihilating polynomial of $diag(\mathbf{w}(\xi))$ is of degree $m$. Suppose $q^{\xi}[z]=\sum\limits_{k=0}^{m-1}q_k(\xi)z^k+z^m$ is the minimal annihilating polynomial of $diag(\mathbf{w}(\xi))$, $q^{\xi}[diag(\mathbf{w}(\xi))]=0.$ In other words,
$$\sum\limits_{k=0}^{m-1}q_k(\xi)diag(\mathbf{w}(\xi))^k=-diag(\mathbf{w}(\xi))^m.$$ Then

\begin{equation}\label{comphank}
\begin{split}
\mathbf{H}_{N-m,m}^{\xi}(0)\mathbf{q}(\xi)&=\sum\limits_{k=0}^{m-1}q_k(\xi)\mathbf{h}_k(\xi)\\ &=\mathbf{V}^T_{m, N-m}(\mathbf{w}(\xi)) (\sum\limits_{k=0}^{m-1}q_k(\xi)diag(\mathbf{w}(\xi))^k )\mathbf{x}(\xi)\\&=-\mathbf{V}^T_{m, N-m}(\mathbf{w}(\xi))diag(\mathbf{w}(\xi))^m\mathbf{x}(\xi)\\&=-\mathbf{h}_m(\xi).
\end{split}
\end{equation}
Conversely,  if $\mathbf{q}(\xi)$ is the solution of linear system \eqref{pe},  let the monic polynomial given by $ \mathbf{q}(\xi)$ be $q^{\xi}[z]$, then by the computation process of \eqref{comphank}, we have
$$\mathbf{V}^T_{m,N-m}(\mathbf{w}(\xi)) q^{\xi}[diag(\mathbf{w}(\xi))] \mathbf{x}(\xi)=0.$$
Since $\mathbf{V}^T_{m, N-m}(\mathbf{w}(\xi))$ is full column rank,  $q^{\xi}[diag(\mathbf{w}(\xi))] \mathbf{x}(\xi)=0.$ By the fact that $q^{\xi}[diag(\mathbf{w}(\xi))]$ is diagonal and $\mathbf{x}(\xi)$ has no zero entries, we know $q^{\xi}[z]$ is a monic annihilating polynomial of $diag(\mathbf{w}(\xi))$. The minimality is followed by counting its degree. If $p[z]$ is a monic annihilating polynomial of $diag(\mathbf{w}(\xi))$, by  computations, it is easy to show the identity \eqref{companion} is an equivalent formulation with the identity \eqref{pe}.
\end{proof}

\begin{corollary}In the case of $\xi=0$ or $\frac{1}{2}$, if $diag(\mathbf{x}(\xi))$ is invertible, then the coefficient vector of the minimal annihilating polynomial of $diag(\mathbf{w}(\xi))$ $\mathbf{c}(\xi) \in \R^{\frac{m+1}{2}}$  is the unique solution of the following linear system:
\begin{equation}\label{pronye}
\mathbf{H}_{N-m,\frac{m+1}{2}}^{\xi}(0)\mathbf{c}(\xi)=-\mathbf{h}_{\frac{m+1}{2}}{(\xi)},
\end{equation} where $\mathbf{H}_{N-m,\frac{m+1}{2}}^{\xi}(0)=\big [\mathbf{h}_0(\xi),\cdots,\mathbf{h}_{\frac{m-1}{2}}(\xi) \big].$
\end{corollary}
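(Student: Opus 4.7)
The plan is to mirror the argument of Proposition \ref{hankelrecovery}(3) while accounting for the reduced degree of the minimal annihilating polynomial at the two exceptional frequencies.

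First, I would verify that at $\xi=0$ and $\xi=\tfrac{1}{2}$ the matrix $diag(\mathbf{w}(\xi))$ has exactly $\tfrac{m+1}{2}$ distinct eigenvalues, so its minimal annihilating polynomial has degree $\tfrac{m+1}{2}$. Indeed, since $\hat{a}$ is real and symmetric on $\mathbb{T}$, $\hat{a}(\eta) = \hat{a}(1-\eta)$, so at $\xi=0$ the entries $\hat{a}(i/m)$ and $\hat{a}((m-i)/m)$ coincide, and since $m$ is odd these pair up into $\tfrac{m-1}{2}$ pairs plus the singleton $\hat{a}(0)$. An analogous pairing works at $\xi=\tfrac{1}{2}$, giving again $\tfrac{m+1}{2}$ distinct values. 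Strict monotonicity of $\hat{a}$ on $[0,\tfrac{1}{2}]$ guarantees that these are the only coincidences.

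Next, to show $\mathbf{c}(\xi)$ solves \eqref{pronye}, I would use the same column-wise factorization that underlies part (1) of the proposition: $m\mathbf{h}_t(\xi)=\mathbf{V}^T_{m,N-m}(\mathbf{w}(\xi))\,diag(\mathbf{w}(\xi))^t\,\mathbf{x}(\xi)$. Writing $q^\xi[z]=\sum_{k=0}^{(m-1)/2}c_k(\xi)z^k+z^{(m+1)/2}$ for the minimal annihilating polynomial of $diag(\mathbf{w}(\xi))$ and applying it to the diagonal matrix, the computation mirrors \eqref{comphank}:
\begin{equation*}
\mathbf{H}_{N-m,(m+1)/2}^{\xi}(0)\mathbf{c}(\xi)+\mathbf{h}_{(m+1)/2}(\xi)=\tfrac{1}{m}\mathbf{V}^T_{m,N-m}(\mathbf{w}(\xi))\,q^\xi[diag(\mathbf{w}(\xi))]\,\mathbf{x}(\xi)=0,
\end{equation*}
so $\mathbf{c}(\xi)$ is a solution.

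The step I expect to require the most care is uniqueness, because it amounts to showing that the rectangular Hankel matrix $\mathbf{H}_{N-m,(m+1)/2}^{\xi}(0)$ has full column rank $\tfrac{m+1}{2}$. Part (2) only tells us that the larger Hankel $\mathbf{H}_{N-m,m}^{\xi}(0)$ has rank $\tfrac{m+1}{2}$. To upgrade this, I would argue that the annihilating polynomial relation lets one express every column $\mathbf{h}_t(\xi)$ with $t\geq \tfrac{m+1}{2}$ as a linear combination of $\mathbf{h}_0(\xi),\ldots,\mathbf{h}_{(m-1)/2}(\xi)$; iterating the identity $diag(\mathbf{w}(\xi))^{(m+1)/2}=-\sum c_k(\xi)\,diag(\mathbf{w}(\xi))^k$ through the factorization yields this. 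Hence the column span of $\mathbf{H}_{N-m,(m+1)/2}^{\xi}(0)$ coincides with that of $\mathbf{H}_{N-m,m}^{\xi}(0)$, which has dimension $\tfrac{m+1}{2}$, forcing its $\tfrac{m+1}{2}$ columns to be linearly independent. This delivers uniqueness and completes the proof, and also verifies the converse direction implicitly (any solution must produce an annihilating polynomial of minimal degree).
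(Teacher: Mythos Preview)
The paper does not supply an explicit proof of this corollary; it is stated immediately after Proposition~\ref{hankelrecovery} and left to the reader as a direct adaptation of part~(3). Your proposal is correct and is exactly the argument the paper implicitly intends: you replay the computation \eqref{comphank} with the degree lowered to $\tfrac{m+1}{2}$, and your uniqueness step---observing that the annihilating relation expresses every $\mathbf{h}_t(\xi)$ with $t\ge\tfrac{m+1}{2}$ in terms of the first $\tfrac{m+1}{2}$ columns, so that $\mathbf{H}_{N-m,(m+1)/2}^{\xi}(0)$ inherits the rank $\tfrac{m+1}{2}$ established in part~(2)---fills in the one detail that does require a moment's thought beyond what is written.
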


Let $\mu$ denote the Lebesgue measure on $\mathbb{T}$, and $X$ be a subclass of $\ell^2(\Z)$ defined by 
$$X=\{x\in \ell^2(\Z): \mu(\{\xi \in \mathbb{T}:\hat x(\xi)=0\})=0\}.$$
Clearly, $X$ is a dense class of $\ell^2(\Z)$ under the norm topology. In noise free scenario,we show that we can recover $a$ and $x$ provided that our initial state $x \in X$.

\begin{theorem} 
\label{thm8.3}Let $x \in X$ be the initial state and the evolution operator $A$ be a convolution operator given by  $a\in \ell^1(\Z)$ so that $\hat a(\xi)$ is real, symmetric, and strictly decreasing on $[0,\frac{1}{2}]$. Then $a$ and $x$ can be recovered from the set of measurement sequences $\{y_l=(a^l*x)(m\Z): l=0,\cdots, N-1\}$ defined in $\eqref{eq8.4}$ when $N \geq 2m$.
\end{theorem}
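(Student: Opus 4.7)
The plan is to apply Proposition \ref{hankelrecovery} pointwise in the Fourier frequency $\xi$, and then use the low-pass structure of $\hat a$ to reassemble the local information into a global reconstruction of $\hat a$ and $\hat x$. Set
\[
E \;=\; \{0,\tfrac{1}{2}\} \;\cup\; \bigcup_{i=0}^{m-1}\bigl\{\xi \in \mathbb{T}: \hat x((\xi+i)/m)=0\bigr\};
\]
since $x \in X$, each set in the union has Lebesgue measure zero, so $\mu(E)=0$. For every $\xi \in \mathbb{T}\setminus E$, (a) $\mathrm{diag}(\mathbf{x}(\xi))$ is invertible, and (b) the $m$ numbers $\hat a((\xi+i)/m)$ are pairwise distinct: by the symmetry and strict monotonicity of $\hat a$, a coincidence would force $2\xi+i_1+i_2 \in m\Z$ for some $i_1\neq i_2$, and $2\xi+i_1+i_2 \in [1,2m-1)$ forces this to equal $m$, whose only resulting solutions in $[0,1)$ are $\xi=0,\tfrac{1}{2}$. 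Hence Proposition \ref{hankelrecovery}(3) applies: solving \eqref{pe} produces the coefficient vector of the minimal annihilating polynomial of $\mathrm{diag}(\mathbf{w}(\xi))$, whose $m$ simple roots comprise the unordered multiset $\{\hat a((\xi+i)/m) : i=0,\ldots,m-1\}$.

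\textbf{Sorting, continuity, and recovery of $a$.} Using the symmetry $\hat a(\eta)=\hat a(1-\eta)$, each position $(\xi+i)/m \in [0,1)$ folds to a point $p_i(\xi) \in [0,1/2]$, and the $m$ folded positions are pairwise distinct for $\xi \in \mathbb{T}\setminus E$; relabel them in increasing order as $p_{(1)}(\xi) < \cdots < p_{(m)}(\xi)$. By the strict monotonicity of $\hat a$ on $[0,1/2]$, sorting the recovered roots in decreasing order matches them bijectively to these positions, yielding the numerical values $\hat a(p_{(k)}(\xi))$. As $\xi$ ranges over $\mathbb{T}\setminus E$, the positions $\{p_{(k)}(\xi)\}$ fill a dense subset of $[0,1/2]$; since $a \in \ell^1(\Z)$ makes $\hat a$ continuous on $\mathbb{T}$, $\hat a$ is uniquely determined on $[0,1/2]$ and, by symmetry, on all of $\mathbb{T}$. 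Inverse Fourier transform then recovers $a$.

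\textbf{Recovery of $x$ and the main obstacle.} With $\hat a$ in hand, Lemma \ref{PSF} specializes, for each $\xi\in\mathbb{T}\setminus E$, to the linear system
\[
\bigl[m\hat y_0(\xi),\,m\hat y_1(\xi),\,\ldots,\,m\hat y_{m-1}(\xi)\bigr]^T \;=\; \mathbf{V}_{m}(\mathbf{w}(\xi))^T\,\mathbf{x}(\xi),
\]
whose Vandermonde coefficient matrix is invertible since the entries of $\mathbf{w}(\xi)$ are distinct. Solving gives $\hat x((\xi+i)/m)$ for every $i$, and sweeping $\xi$ over a full-measure set recovers $\hat x$ a.e.\ on $\mathbb{T}$, hence $x$. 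The main technical obstacle is the identifiability step in the second paragraph --- converting the unordered root set into values of $\hat a$ at specified points. This is the sole place where the combined hypotheses on the filter (real, symmetric, strictly monotone on $[0,1/2]$) and the oddness of $m$ are genuinely used: they guarantee both the distinctness of the $m$ folded positions off a measure-zero set and the existence of a well-defined inverse to the sorting map.
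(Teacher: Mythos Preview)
Your proof is correct and follows essentially the same approach as the paper: work fiberwise in the Fourier domain, apply Proposition~\ref{hankelrecovery}(3) off the measure-zero exceptional set to extract the unordered root set, use the symmetry and strict monotonicity of $\hat a$ to sort the roots and assign them to positions, and then recover $x$. Your treatment is in fact more explicit than the paper's in two places: you spell out why the $m$ folded positions $p_i(\xi)$ are distinct for $\xi\notin\{0,\tfrac12\}$ (the paper simply asserts the rank of $\mathbf{V}_m(\mathbf{w}(\xi))$), and you recover $\hat x$ directly by inverting the Vandermonde system $\mathbf{V}_m(\mathbf{w}(\xi))^T\mathbf{x}(\xi)=m[\hat y_0(\xi),\ldots,\hat y_{m-1}(\xi)]^T$, whereas the paper defers this step to the techniques of \cite{ADK13}. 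One cosmetic remark: your density-plus-continuity argument for pinning down $\hat a$ on all of $[0,\tfrac12]$ is fine, but since the folded positions actually cover a set of \emph{full measure} (not merely a dense set), you could equally well conclude, as the paper does, that $\hat a$ is determined a.e.\ and then invoke the inverse Fourier transform directly.
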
  
  
\begin{proof} 
Since Fourier transformation is an isometric isomorphism from $\ell^2(\Z)$ to $L^2(\mathbb{T})$, we can look at this recovery problem on the Fourier domain equivalently. We are going to show that the regular subsampled data $\{y_l\}_{l=0}^{N-1}$ contains enough information to recover the Fourier spectrum of $a$ on $\mathbb{T}$ up to a measure zero set. By our assumptions of $x$, there exists a measurable subset $E_0$ of $\mathbb{T}$ with $\mu(E_0)=1$, so that $diag(\mathbf{x}(\xi))$ is an invertible matrix for $\xi \in E_0$.  Let $E=E_0-\{0,\frac{1}{2}\}$, if $\xi \in E$, by (3) of Proposition \ref{hankelrecovery}, we can recover the minimal  annihilating polynomial of $diag(\mathbf{w}(\xi))$. Now to recover the diagonal entries of $diag(\mathbf{w}(\xi))$, it amounts to finding the roots of this minimal annihilating polynomial and ordering them according to the monotonicity and symmetric condition on $\hat a$. In summary,  for each $\xi \in E$, we can uniquely determine $\{\hat a(\frac{\xi+i}{m}):i=0,\cdots,m-1\}$. Note $\mu(E)=1$,  and hence we can recover the Fourier spectrum of $a$ up to a measure zero set.  The conclusion is followed by applying the inverse Fourier transformation on $\hat a(\xi)$. Once $a$ is recovered, we can recover $x$ from the spatiotemporal samples $\{y_l\}_{l=0}^{m-1}$ using techniques developed in \cite{ADK13}.
\end{proof}

Theorem \ref{thm8.3} addresses the infinite dimensional analog of Theorem 4.1 in \cite{AK14}. If we don't know anything about $a$ in advance, with minor modifications of the above proof, one can show the recovery of  the range of $\hat a$ on a measurable subset of $\mathbb{T}$, where the measure of this subset is 1.

\begin{definition} Let $a=(a(n))_{n \in \Z}$, the support set of $a$ is defined by $Supp(a)=\{k\in \Z: a(k)\neq 0\}$. If $Supp(a)$ is a finite set,  $a$ is said to be of finite impulse response. 
\end{definition}

In particular, if $x$ is of finite impulse response, then $x \in X$. Now if both $x$ and $a$ are of finite impulse response, and we know an upper bound $r \in \N$ such that 
$Supp(a)$ and $Supp(x)$ are contained in $\{-r, -r+1, \cdots,r\}$, then we can compute the value of the Fourier 
transformation of $\{y_l\}_{l=0}^{N-1}$ at any $\xi \in \mathbb{T}.$ From the proof of Theorem \ref{thm8.3}, we can give an algorithm
similar to the classical Prony method to recover $\{\hat a(\frac{\xi+i}{m}):i=0,\cdots,m-1\}$ almost surely, given $\xi$ chosen uniformly from $\mathbb{T}$. 
It is summarized in Algorithm $\ref{algo:prony}.$


\begin{algorithm}
  \caption{Generalized Prony Method }\label{algo:prony}
  \begin{algorithmic}[1]
    \REQUIRE$ N\geq 2m$, $r \in \N$, $ \{y_l\}_{l=0}^{N-1}$, $\xi (\neq 0,\frac{1}{2}) \in \mathbb{T}.$
    
    \STATE Compute the Fourier transformation of the measurement sequences $\{y_l\}_{l=0}^{N-1}$ and build the Hankel matrix $\mathbf{H}^{\xi}_{N-m,m}(0)$ and the vector $\mathbf{h}_{m}(\xi)$.

\STATE  Compute the solution of the overdetermined linear system \eqref{pe}:
\[
\mathbf{H}_{N-m,m}^{\xi}(0)\mathbf{q}(\xi)=-\mathbf{h}_m{(\xi)}.
\] Form the polynomial $q^{\xi}[z]=\sum\limits_{k=0}^{m-1}q_k(\xi)z^k+z^m$ and find its roots, this can be done by solving the standard eigenvalue problem of its companion matrix. 

\STATE Order the roots by the monotonicity and symmetric condition of $\hat a$ to get $\{\hat a(\frac{\xi+i}{m}): i=0,\cdots,m-1\}$.\\
\ENSURE $\{\hat a(\frac{\xi+i}{m}): i=0, \cdots,m-1\}.$
\end{algorithmic}
\end{algorithm}


\begin{corollary} 
\label{cor1} In addition to the assumptions of Theorem \ref{thm8.3}, if both $a$ and $x$ are of finite impulse response  with support contained in $\{-r, -r + 1, \cdots , r\}$ for some $r \in \mathbb{N}$,  then it is enough to determine $a$ and $x$ after we recover $\{ \hat a(\eta_i) : i = 1, \cdots, r\}$ at $r$ distinct locations by Algorithm \ref{algo:prony}.
\end{corollary}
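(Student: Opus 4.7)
The plan is to leverage the real-symmetric low-pass structure of $a$ to reduce the problem to polynomial interpolation in a Chebyshev variable, and then reuse the Vandermonde structure already established in Proposition \ref{hankelrecovery} to pin down $x$ once $a$ is in hand.

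For the recovery of $a$, I would first exploit symmetry. Since $\mathrm{Supp}(a)\subset\{-r,\dots,r\}$ and $\hat a$ is real and symmetric about $0$, we have $a(-k)=a(k)$, so
\begin{equation*}
\hat a(\xi)=a(0)+2\sum_{k=1}^{r}a(k)\cos(2\pi k\xi)=P(\cos 2\pi\xi),
\end{equation*}
where $P$ is a real univariate polynomial of degree at most $r$ obtained by substituting $T_k(\cos 2\pi\xi)=\cos(2\pi k\xi)$ for the Chebyshev polynomial $T_k$. Hence recovering $a$ is equivalent to recovering $P$, which has $r+1$ unknown coefficients. The normalization $\hat a(0)=1$ yields the evaluation $P(1)=1$, and Algorithm \ref{algo:prony} supplies the additional $r$ evaluations $P(\cos 2\pi\eta_i)=\hat a(\eta_i)$ for $i=1,\dots,r$. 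Provided the $r+1$ nodes $\{1,\cos 2\pi\eta_1,\dots,\cos 2\pi\eta_r\}$ are pairwise distinct, Lagrange interpolation determines $P$ uniquely, and inverse Fourier transformation then recovers $a$.

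The distinctness of these nodes is the only delicate point, and it follows from the strict monotonicity hypothesis: by folding via the symmetry $\hat a(\eta)=\hat a(1-\eta)$, each $\eta_i$ may be replaced by an element of $(0,\tfrac12]$ without changing $\hat a(\eta_i)$, and strict decrease of $\hat a$ on $[0,\tfrac12]$ guarantees that $r$ distinct values of $\hat a(\eta_i)$ correspond to $r$ distinct folded frequencies in $(0,\tfrac12]$, which in turn yield $r$ distinct cosines in $[-1,1)$, all separated from $\cos 0=1$. Thus the $r$ output locations of a single call (or a few calls) of Algorithm \ref{algo:prony} are always sufficient once the outputs are filtered for distinct $\hat a$-values.

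Once $a$ is known, recovery of $x$ proceeds by inverting the aliasing relation from Lemma \ref{PSF}. For $\xi$ outside a measure-zero set (namely off $\{0,\tfrac12\}$ and the finite set where two of the $m$ cosets share an $\hat a$-value), the $m\times m$ Vandermonde matrix with nodes $\hat a((\xi+i)/m)$, $i=0,\dots,m-1$, is invertible by the same argument used in Proposition \ref{hankelrecovery}(2), so the square linear system
\begin{equation*}
m\hat y_l(\xi)=\sum_{i=0}^{m-1}\hat a^{l}\!\Bigl(\tfrac{\xi+i}{m}\Bigr)\hat x\!\Bigl(\tfrac{\xi+i}{m}\Bigr),\qquad l=0,\dots,m-1,
\end{equation*}
solves uniquely for the vector $\mathbf{x}(\xi)$. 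This determines $\hat x$ almost everywhere on $\mathbb{T}$, and since $x$ has support in $\{-r,\dots,r\}$ the function $\hat x$ is a trigonometric polynomial of degree at most $r$, which is therefore determined by its values on a set of full measure. The main obstacle, as already noted, is the distinctness of the Chebyshev interpolation nodes; everything else is a direct assembly of results from Proposition \ref{hankelrecovery} and Theorem \ref{thm8.3}.
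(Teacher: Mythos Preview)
Your proof is correct and follows essentially the same route as the paper: both reduce the recovery of $a$ to the unisolvence of the cosine/Chebyshev system and then recover $\mathbf{x}(\xi)$ via the Vandermonde relation from Proposition~\ref{hankelrecovery}. Your version is in fact slightly more careful, since you explicitly invoke the normalization $\hat a(0)=1$ to supply the $(r{+}1)$-st interpolation condition (the paper's displayed system \eqref{eq4} has only $r$ equations for the $r{+}1$ unknowns $a(0),\dots,a(r)$) and you address the distinctness of the nodes $\cos 2\pi\eta_i$, which the paper leaves implicit in its appeal to the Chebyshev system.
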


\begin{proof}
Under these assumptions, we know 
\begin{equation}
\hat a (\xi)=a(0)+\sum_{k=1}^{r} a(k) \cos(2\pi k \xi).
\end{equation}  
Suppose $\{ \hat a(\eta_i): i=1,\cdots, r, \eta_i \neq \eta_j \text{ if $i \neq j$} \}$ are recovered,  we set up the following linear equation
 \begin{equation}\label{eq4}
  \left( \begin{array}{ccccc} 
    1 & cos(2\pi \eta_1) &\cdots& cos(2r\pi \eta_1)\\
   1& cos(2\pi \eta_2)&\cdots& cos(2r\pi \eta_2)\\
  \vdots& \vdots & \cdots  & \vdots\\
1& cos(2\pi \eta_{r}) &\cdots&cos(2r \pi \eta_{r})
   \end{array}
   \right) \left( \begin{array}{c} 
   a(0)\\
   a(1)\\
  \vdots\\
   a(r) \end{array} \right)=\left( \begin{array}{c} 
   \hat a(\eta_1)\\
   \hat a(\eta_2)\\
  \vdots\\
  \hat a(\eta_{r}) \end{array} \right).
\end{equation}
Note that $\{1,\cos(2\pi \eta ), \cdots,\cos(2r\pi \eta )\}$ is a Chebyshev system on $[0,1]$(see \cite{AFT63}),   and hence \eqref{eq4} has a unique solution.
Then we can recover $x$ by solving the linear system 
$$\mathbf{V}^T_{m, N-m}(\mathbf{w}(\xi)) \mathbf{x}(\xi)=\mathbf{h}_0(\xi)$$
for finitely many $\xi$s, which finishes the proof.
\end{proof}

\section{Perturbation Analysis} 
\label{stabilityanalysis}
In previous sections, we have shown that if we are able to compute the spectral data $\{\hat y_l(\xi)\}_{l=0}^{N-1}$ at $\xi$, then we can recover the Fourier spectrum $\{\hat a(\frac{\xi+i}{m}):i=0,\cdots,m-1\}$ by Algorithm \ref{algo:prony}. However, we assume the spectral data are noise free. A critical issue still remains. We need to analyze the accuracy of the solution achieved by Algorithm \ref{algo:prony} in the presence of noise. Mathematically speaking, assume the measurements are given by $\{\tilde{y}_l\}_{l=0}^{N-1}$ compared to $\eqref{eq8.4}$ so that $\lvert\lvert \hat y_l(\xi) -\widehat{\tilde{y}_l}(\xi)\rvert\rvert_{\infty} \leq \epsilon_l$ for all $\xi \in \mathbb{T}$. Given an estimation for $\epsilon = max_l  \lvert \epsilon_l \rvert$, how large can the error be in the worst case for the output parameters of Algorithm \ref{algo:prony} in terms of $\epsilon$, and the system parameters $a, x$ and $m$.  Most importantly, we need to understand analytically what kind of effects that the subsampling factor $m$ will impose on the performance of the Algorithm \ref{algo:prony}. 

In this section, for simplicity, we choose $N=2m$ to meet the minimal requirement. In this case, the Hankel matrix $\mathbf{H}^{\xi}_{N-m,m}(0)$ is a square matrix and the vectors $\mathbf{h}_t(\xi)$ are of length $m$. We denote them by two new notations:  $\mathbf{H}_m(\xi)$ and $\mathbf{b}_t(\xi)$.  Our perturbation analysis will consist of two steps. Suppose our measurements are perturbed from $\{y_l\}_{l=0}^{2m-1}$ to $\{\tilde{y}_l\}_{l=0}^{2m-1}$. For any $\xi$, we firstly measure the perturbation of $\mathbf{q}(\xi)$ in terms of $\ell^{\infty}$ norm. Secondly we measure the perturbation of the roots.  It is well known that the roots of a polynomial are continuously dependent on the small change of its coefficients, see Proposition \ref{polyp}. Hence, for a small perturbation,  although the roots of the perturbed polynomial $\tilde{q}^{\xi}[z]$ may not be real, we can order them according to their modulus and have a one to one correspondence with the roots of ${q}^{\xi}[z]$. Before presenting our main results in this section, let us introduce some useful notations and terminologies.


\begin{definition} \label{def6}Let $\xi \in \mathbb{T}-\{0,\frac{1}{2}\}$, consider the set $\{\hat a(\frac{\xi+i}{m}):i=0,\cdots,m-1 \}$ that consists of m distinct nodes. 
\begin{enumerate}
\item For $0 \leq k \leq m-1$, the separation between $\hat a(\frac{\xi+k}{m})$ with other $m-1$ nodes is measured  by $$ \delta_k(\xi)=\frac{1}{\prod\limits_{j \neq k \atop 0\leq j\leq m-1 } \lvert \hat a(\frac{\xi+j}{m})-\hat a(\frac{\xi+k}{m}) \rvert}. $$

\item For $0 \leq k \leq m$, the $k$-th elementary symmetric function generated by the $m$ nodes is denoted by 
\begin{equation}
\sigma_{k}(\xi)=
 \begin{cases}
1 &\mbox{ if $k$=0},\\
\sum\limits_{0\leq j_1<\cdots<j_k\leq m-1}  \hat a(\frac{\xi+j_1}{m})\hat a(\frac{\xi+j_2}{m})\cdots \hat a(\frac{\xi+j_k}{m})&\mbox{otherwise}.
 \end{cases}
\end{equation}
 For $0\leq k, i \leq m-1$, the $k$-th elementary symmetric function  generated by  $m-1$ nodes  with $\hat a(\frac{\xi+i}{m})$ missing is denoted by $\sigma_{k}^{(i)}(\xi)$.
 \end{enumerate}
\end{definition}

The following Proposition measures the perturbation of the polynomial roots in terms of the perturbation of its coefficients and is the key to our perturbation analysis. 

\begin{proposition}[see Proposition $\mathbf{V}.1$ in \cite{RBG06}]\label{polyp}
Let $z_k$ be a root of multiplicity $M_k \in \mathbb{N}^+$ of the $r$-th order polynomial $p[z].$ For all $\epsilon >0$, let $p_{\epsilon}[z]=p[z]+\epsilon \Delta p[z],$ where $\Delta p[z]$ is a polynomial of order lower than $r$. Suppose that $\Delta p[z_k] \neq 0.$ Then there exists a positive $\epsilon_0$ such that for all $\epsilon <\epsilon_0$ there are exactly $M_k$ roots of $p_{\epsilon}[z]$, denoted $\{z_{k,m}(\epsilon)\}_{m \in \{0,\cdots, M_k-1\}}$, which admit the first-order fractional expansion
\begin{equation}
z_{k,m}(\epsilon)=z_k+\epsilon^{\frac{1}{M_k}}\Delta z_k e^{2\pi i \frac{m}{M_k}}+O(\epsilon^{\frac{2}{M_k}}),
\end{equation} where $\Delta z_k$ is an arbitrary $M_k$-th root of the complex number 

\begin{equation}
(\Delta z_k)^{M_k}=-\frac{\Delta p[z_k]}{\frac{1}{M_k!} p^{(M_k)}[z_k]}.
\end{equation}

\end{proposition}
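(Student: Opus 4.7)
The plan is to reduce the statement to a Puiseux-type expansion of the roots and then close the argument with Rouch\'e's theorem. I would organize the proof in four steps.

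First, I would factor $p[z]=(z-z_k)^{M_k}\phi[z]$, where $\phi[z]$ is a polynomial of degree $r-M_k$ with $\phi[z_k]\neq 0$, and Taylor-expand around $z_k$:
\begin{equation*}
p[z_k+\eta]=\frac{p^{(M_k)}[z_k]}{M_k!}\,\eta^{M_k}+\eta^{M_k+1}\,\psi[\eta],\qquad \Delta p[z_k+\eta]=\Delta p[z_k]+\eta\,\rho[\eta],
\end{equation*}
where $\psi$ and $\rho$ are polynomials that are bounded on a fixed neighborhood of $0$. Note that the coefficient $p^{(M_k)}[z_k]/M_k!=\phi[z_k]$ is nonzero by the multiplicity assumption, and $\Delta p[z_k]\neq 0$ by hypothesis, so the leading equation $(\Delta z_k)^{M_k}=-\Delta p[z_k]/(\frac{1}{M_k!}p^{(M_k)}[z_k])$ defines $M_k$ nonzero complex numbers forming a regular $M_k$-gon.

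Second, I would introduce the Newton rescaling $\eta=\epsilon^{1/M_k}u$, with $\epsilon^{1/M_k}$ meaning any fixed branch of the $M_k$-th root. Substituting into $p_\epsilon[z_k+\eta]=0$ and dividing by $\epsilon$ gives
\begin{equation*}
\frac{p^{(M_k)}[z_k]}{M_k!}\,u^{M_k}+\Delta p[z_k]+\epsilon^{1/M_k}u\,\rho[\epsilon^{1/M_k}u]+\epsilon^{1/M_k}u^{M_k+1}\,\psi[\epsilon^{1/M_k}u]=0.
\end{equation*}
As $\epsilon\to 0$, the last two terms are uniformly $O(\epsilon^{1/M_k})$ on any bounded set of $u$. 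Denote the leading polynomial by $F_0[u]=\frac{p^{(M_k)}[z_k]}{M_k!}u^{M_k}+\Delta p[z_k]$; its $M_k$ simple roots are exactly the $M_k$ values $\Delta z_k e^{2\pi i m/M_k}$.

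Third, I would apply Rouch\'e's theorem in the $u$-plane around each of the $M_k$ simple roots of $F_0$. Because these roots are simple and distinct, I can enclose each one in a small disk of radius $\delta$ on whose boundary $|F_0[u]|$ is bounded below by a positive constant; the perturbation terms have size $O(\epsilon^{1/M_k})$, so for $\epsilon<\epsilon_0$ small enough the rescaled equation has exactly one root in each of the $M_k$ disks. Unrescaling via $\eta=\epsilon^{1/M_k}u$ gives $M_k$ roots of $p_\epsilon$ near $z_k$, one in a disk of radius $\delta\epsilon^{1/M_k}$ around each candidate $z_k+\epsilon^{1/M_k}\Delta z_k e^{2\pi im/M_k}$.

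Fourth, for the order of the remainder I would insert $u=\Delta z_k e^{2\pi im/M_k}+v$ into the rescaled equation and use the implicit function theorem: since $F_0$ has a simple zero at $\Delta z_k e^{2\pi im/M_k}$ and the full equation is analytic in $(v,\epsilon^{1/M_k})$, the solution $v$ is analytic in $\epsilon^{1/M_k}$ and vanishes at $\epsilon=0$, hence $v=O(\epsilon^{1/M_k})$. This yields $z_{k,m}(\epsilon)-z_k=\epsilon^{1/M_k}\Delta z_k e^{2\pi im/M_k}+O(\epsilon^{2/M_k})$, as claimed. The main technical subtlety is the uniform lower bound on $|F_0|$ on the circles used in Rouch\'e's theorem and the check that the perturbation estimates are uniform in $\epsilon$; once that is handled, counting zeros gives both the existence and the exact number $M_k$.
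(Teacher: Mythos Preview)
The paper does not prove this proposition; it is quoted as a black box from \cite{RBG06} (Proposition V.1 there) and then applied in the proof of Proposition~\ref{prop3}. Your argument is a correct self-contained proof: the Newton rescaling $\eta=\epsilon^{1/M_k}u$ together with Rouch\'e's theorem around the simple roots of $F_0[u]=\frac{p^{(M_k)}[z_k]}{M_k!}u^{M_k}+\Delta p[z_k]$ and the implicit function theorem in the variable $\epsilon^{1/M_k}$ is exactly the classical Puiseux-expansion route to such a statement, and each of your four steps goes through as written.
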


\begin{proposition}\label{prop3}Let the perturbed measurements $\{\tilde{y}_l\}_{l=0}^{2m-1}$ be given with an error satisfying $\lvert\lvert \widehat{\tilde{y}_l}(\xi)-\hat{y_l}(\xi) \rvert\rvert_{\infty}\leq \epsilon, \forall l$. Let $\widetilde{\mathbf{H}}_m(\xi)$ and $\tilde{\mathbf{b}}_m(\xi)$ be given by $\{\widehat{\tilde{y}_l}(\xi)\}_{l=0}^{2m-1}$ in the same way as in \eqref{hankelM} and \eqref{timeseries}. Assume $\mathbf{H}_m(\xi)$ is invertible and $\epsilon$ is sufficient small so that $\widetilde{\mathbf{H}}_m(\xi)$ is also invertible. Denote by $ \tilde {\mathbf{q}}(\xi)$ the solution of  the linear system $\widetilde{\mathbf{H}}_m(\xi)\tilde {\mathbf{q}}(\xi)=- \tilde{\mathbf{b}}_m(\xi)$. Let $\tilde{q}^{\xi}[z]$ be the Prony polynomial formed by $\tilde {\mathbf{q}}(\xi)$ and $\{\tilde{ \hat {a}}(\frac{\xi+i}{m}):i=0,\cdots,m-1 \}$ be its roots, then we have the following estimates as $\epsilon \rightarrow 0$, 
\begin{equation}
\label{estimation1}
\lvert \lvert  \mathbf{q}(\xi)-\tilde{\mathbf{q}}(\xi) \rvert\rvert_{\infty}  \leq \lvert \lvert \mathbf{H}_m^{-1}(\xi) \rvert\rvert_{\infty}(1+m \beta_1(\xi)  )\epsilon+O(\epsilon^2),
\end{equation}
where $\beta_1(\xi)=\max\limits_{k=1,\cdots,m}\lvert \sigma_{k}(\xi) \rvert$.  As a result, we achieve the following first order estimation
\begin{equation}
\label{estimation2}
\lvert \tilde{\hat{a}}(\frac{\xi+i}{m})-\hat a(\frac{\xi+i}{m})\rvert \leq C_i(\xi)(1+m\beta_1(\xi))\lvert \lvert \mathbf{H}_m^{-1}(\xi) \rvert\rvert_{\infty}\epsilon+O(\epsilon^2),
\end{equation}
where $C_i(\xi)= \delta_i(\xi) \cdot (\sum\limits_{k=0}^{m-1} \lvert \hat  a ^{k}(\frac{\xi+i}{m}) \rvert) $. 
\end{proposition}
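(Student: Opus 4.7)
The plan is a two-stage first-order perturbation analysis: first control the coefficient error $\|\tilde{\mathbf{q}}(\xi) - \mathbf{q}(\xi)\|_\infty$ via a standard linear-system argument, which produces \eqref{estimation1}; then pass from coefficient perturbations to root perturbations using Proposition \ref{polyp}, which produces \eqref{estimation2}.

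For the first stage, I write $\widetilde{\mathbf{H}}_m = \mathbf{H}_m + \Delta \mathbf{H}_m$ and $\tilde{\mathbf{b}}_m = \mathbf{b}_m + \Delta \mathbf{b}_m$, subtract the exact system $\mathbf{H}_m \mathbf{q} = -\mathbf{b}_m$ from its perturbed counterpart, and discard the resulting second-order term $\mathbf{H}_m^{-1}\Delta\mathbf{H}_m\Delta\mathbf{q}$ to obtain
\[
\Delta \mathbf{q}(\xi) = -\mathbf{H}_m^{-1}(\xi)\bigl(\Delta \mathbf{b}_m(\xi) + \Delta \mathbf{H}_m(\xi)\,\mathbf{q}(\xi)\bigr) + O(\epsilon^2).
\]
Three ingredients then bound the right-hand side in $\ell^\infty$ norm: (i) $\|\Delta \mathbf{b}_m(\xi)\|_\infty \leq \epsilon$ is immediate from the hypothesis; (ii) $\|\Delta \mathbf{H}_m(\xi)\|_\infty \leq m\epsilon$, since each row of the $m \times m$ Hankel perturbation has $m$ entries of the form $\widehat{\tilde{y}_l}(\xi) - \hat{y}_l(\xi)$; and (iii) the crucial bound $\|\mathbf{q}(\xi)\|_\infty \leq \beta_1(\xi)$. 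For (iii) I invoke Proposition \ref{hankelrecovery}(3) to identify $q^\xi[z]$ as the minimal annihilating polynomial with simple roots $\hat{a}((\xi+i)/m)$, and then apply Vieta's formulas to write $q_j(\xi) = (-1)^{m-j}\sigma_{m-j}(\xi)$, so that $|q_j(\xi)| \leq \max_{k \geq 1} |\sigma_k(\xi)| = \beta_1(\xi)$. Combining (i)--(iii) and multiplying through by $\mathbf{H}_m^{-1}(\xi)$ delivers \eqref{estimation1}.

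For the second stage, the assumptions $\xi \neq 0, \tfrac{1}{2}$ together with the strict monotonicity of $\hat{a}$ on $[0, \tfrac{1}{2}]$ make the $m$ roots $z_i := \hat{a}((\xi+i)/m)$ of $q^\xi[z]$ simple, so Proposition \ref{polyp} applies with each multiplicity $M_i = 1$. Taking the perturbation polynomial to be $\Delta q^\xi[z] = \sum_{k=0}^{m-1}\Delta q_k(\xi)z^k$ (the leading coefficient is fixed at $1$), the first-order root displacement reads
\[
\tilde{z}_i - z_i = -\frac{\Delta q^\xi[z_i]}{(q^\xi)'[z_i]} + O(\|\Delta \mathbf{q}(\xi)\|_\infty^2).
\]
I would bound the numerator by $|\Delta q^\xi[z_i]| \leq \|\Delta\mathbf{q}(\xi)\|_\infty \sum_{k=0}^{m-1}|z_i|^k$ and evaluate the denominator exactly via the factorization $(q^\xi)'[z_i] = \prod_{j \neq i}(z_i - z_j)$, whose modulus is precisely $1/\delta_i(\xi)$ by Definition \ref{def6}(1). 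Substituting the bound \eqref{estimation1} on $\|\Delta\mathbf{q}(\xi)\|_\infty$ and collecting constants reproduces exactly the factor $C_i(\xi)$ appearing in \eqref{estimation2}.

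The main conceptual step I expect to matter is the Vieta identification in stage one: without it, $\|\mathbf{q}(\xi)\|_\infty$ would remain an opaque constant, whereas the identification yields the interpretable spectral factor $\beta_1(\xi)$ that the proposition highlights. The rest is routine bookkeeping; the only subtlety is verifying that the $O(\epsilon^2)$ remainders in both stages have coefficients depending only on $\|\mathbf{H}_m^{-1}(\xi)\|_\infty$, $\beta_1(\xi)$, and the separations $\delta_i(\xi)$, all of which are finite under the stated hypotheses.
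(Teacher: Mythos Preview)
Your proposal is correct and follows essentially the same two-stage argument as the paper: a first-order linear-system perturbation (the paper writes the exact identity $\Delta\mathbf{q}=\mathbf{H}_m^{-1}(I+\mathbf{H}_m^{-1}\Delta\mathbf{H}_m)^{-1}(-\Delta\mathbf{b}_m-\Delta\mathbf{H}_m\mathbf{q})$ and then truncates, whereas you subtract the two systems and discard the cross term directly---these are the same first-order expansion), followed by the Vieta identification $\|\mathbf{q}(\xi)\|_\infty=\beta_1(\xi)$ and the simple-root case of Proposition~\ref{polyp} with the numerator/denominator bounds exactly as you describe.
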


\begin{proof} Note that linear system $\eqref{pe}$ is perturbed to be 
\begin{equation}
\widetilde{\mathbf{H}}_m(\xi) \tilde{\pmb{q}}(\xi)=-\tilde{\mathbf{b}}_m(\xi).
\end{equation}
By our assumptions, we have 
\begin{equation}
\lvert\lvert \Delta{\mathbf{H}_m}(\xi) \rvert\rvert_{\infty}=\lvert\lvert \widetilde{\mathbf{H}}_m(\xi)-\mathbf{H}_m(\xi) \rvert\rvert_{\infty} \leq m \epsilon,
\end{equation}
\begin{equation}
\label{estimation1}
\lvert\lvert \Delta{\mathbf{b}}_m(\xi) \rvert\rvert_{\infty}=\lvert\lvert \tilde {\mathbf{b}}_m(\xi)-\mathbf{b}_m(\xi) \rvert\rvert_{\infty} \leq  \epsilon.
\end{equation}
Define $\Delta \mathbf{q}(\xi)=\tilde{\mathbf{q}}(\xi)-\mathbf{q}(\xi)$, by simple computation, 
\begin{equation}
\label{esq}
\Delta \mathbf{q}(\xi)=\mathbf{H}_m^{-1}(\xi)(I+\mathbf{H}_m^{-1}(\xi) \Delta \mathbf{H}_m(\xi))^{-1}(-\Delta \mathbf{b}_m(\xi)-\Delta \mathbf{H}_m(\xi) \mathbf{q}(\xi)). 
\end{equation}
Hence if $\epsilon \rightarrow 0$, we obtain
\begin{equation}
\label{esqr}
\Delta \mathbf{q}(\xi)=\mathbf{H}_m^{-1}(\xi)(-\Delta \mathbf{b}_m(\xi)-\Delta \mathbf{H}_m(\xi)\mathbf{q}(\xi))+O(\epsilon^2). 
\end{equation}
Now we can easily get an estimation of $\ell^{\infty}$ norm of $ \Delta \mathbf{q}(\xi)$
\begin{equation}
\label{espoly}
\lvert \lvert \Delta \mathbf{q}(\xi) \rvert\rvert_{\infty} \leq \lvert \lvert \mathbf{H}_m^{-1}(\xi) \rvert\rvert_{\infty}(1+m \lvert \lvert \mathbf{q}(\xi) \rvert\rvert_{\infty})\epsilon+O(\epsilon^2). 
\end{equation}
Since $\{\hat a(\frac{\xi+i}{m}): i=0,\cdots,m-1 \}$ are the roots of $q^{\xi}[z]$, using 
Vieta's Formulas(see \cite{V03}), we know $$\lvert\lvert  \mathbf{q}(\xi) \rvert\rvert_{\infty} = \mathop{max}_{1 \leq k\leq m} \lvert \sigma_{k}(\xi) \rvert. $$
Let $(\Delta q(\xi))[z]$ be the polynomial of degree less than or equal to $m-1$ defined by the vector $\Delta \mathbf{q}(\xi)$. Using Proposition \ref{polyp}, and denote by $(q^{\xi})^{'}[z]$ the derivative function of $q^{\xi}[z]$, for $0 \leq i \leq m-1$, we conclude 
\begin{equation}
\label{esroots}
\begin{split}
 \lvert \tilde{\hat{a}}(\frac{\xi+i}{m})-\hat a(\frac{\xi+i}{m})\rvert &=\lvert \frac{\Delta \mathbf{q}(\xi)[\hat a(\frac{\xi+i}{m})]}{ (q^{\xi})^{'}[\hat a(\frac{\xi+i}{m})]}+O(\epsilon^2) \lvert \\
& \leq \frac{\lvert\lvert \Delta \mathbf{q}(\xi) \rvert\rvert_{\infty} (\sum\limits_{k=0}^{m-1}\lvert \hat a ^{k}(\frac{\xi+i}{m})\rvert)}{\prod\limits_{j \neq i \atop 0\leq j\leq m-1} \lvert \hat a(\frac{\xi+j}{m})-\hat a(\frac{\xi+i}{m}) \rvert}+O(\epsilon^2)\\
&\leq   C_i(\xi)\lvert \lvert \mathbf{H}_m^{-1}(\xi) \rvert\rvert_{\infty}(1+m\mathop{max}_{1 \leq k\leq m}\lvert \sigma_{k}(\xi)\rvert)\epsilon+O(\epsilon^2),
\end{split}
\end{equation}
where $C_i(\xi)=\delta_i(\xi) (\sum\limits_{k=0}^{m-1} \lvert \hat a ^{k}(\frac{\xi+i}{m})\rvert)$.

\end{proof}

Therefore it is important to understand the relation between the behavior of $\lvert\lvert \mathbf{H}_m^{-1}(\xi) \rvert\rvert_{\infty}$ and our system parameters, i.e, $a$, $m$ and $x$. Next, we are going to estimate $\lvert\lvert \mathbf{H}_m^{-1}(\xi) \rvert\rvert_{\infty}$ and reveal their connection with the spectral properties of $a$,$x$ and the subsampling factor $m$.

\begin{proposition}\label{theorem3} Assume $\mathbf{H}_m(\xi)$ is invertible,  we have the lower bound estimation
\begin{equation}
\label{hle}
  \lvert\lvert \mathbf{H}_m^{-1}(\xi) \rvert\rvert_{\infty}  \geq m \cdot \max_{i=0,\cdots,m-1}  \frac{\beta_2(i,\xi) \delta_i(\xi) }{\lvert \hat x (\frac{\xi+i}{m})\rvert}, 
\end{equation}
where $\beta_2(i,\xi)=\max\limits_{k=0,\cdots,m-1} \lvert \sigma_k^{(i)}(\xi) \rvert$, and the upper bound estimation

\begin{equation}
\label{hue}
\lvert\lvert \mathbf{H}_m^{-1}(\xi) \rvert\rvert_{\infty} \leq m \cdot \max_{i=0,\cdots,m-1} \frac{(\delta_i(\xi) \prod\limits_{j \neq i \atop 0\leq j\leq m-1} (1+ \lvert \hat a(\frac{\xi+j}{m}) \rvert))^2 }{\lvert \hat x (\frac{\xi+i}{m})\rvert}. 
\end{equation}
\end{proposition}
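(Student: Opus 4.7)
The plan is to start from the square factorization in Proposition~\ref{hankelrecovery}(1): specializing to $N=2m$, $s=0$, one has $m\,\mathbf{H}_m(\xi)=\mathbf{V}_m^T(\mathbf{w}(\xi))\,\mathrm{diag}(\mathbf{x}(\xi))\,\mathbf{V}_m(\mathbf{w}(\xi))$. Since $\mathbf{H}_m(\xi)$ is assumed invertible, both Vandermonde factors and $\mathrm{diag}(\mathbf{x}(\xi))$ are invertible, so
$$\mathbf{H}_m^{-1}(\xi)\;=\;m\,\mathbf{V}_m^{-1}(\mathbf{w}(\xi))\,\mathrm{diag}(\mathbf{x}(\xi))^{-1}\,\mathbf{V}_m^{-T}(\mathbf{w}(\xi)).$$
Writing $\mathbf{v}_i$ for the $i$-th column of $\mathbf{V}_m^{-1}(\mathbf{w}(\xi))$, a direct computation identifies $\mathbf{v}_i$ with the coefficient vector of the Lagrange polynomial $L_i(z)=\prod_{j\neq i}(z-\hat a(\frac{\xi+j}{m}))/\prod_{j\neq i}(\hat a(\frac{\xi+i}{m})-\hat a(\frac{\xi+j}{m}))$, whose coefficient of $z^k$ equals $(-1)^{m-1-k}\delta_i(\xi)\,\sigma_{m-1-k}^{(i)}(\xi)$. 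Reading this off, together with the triangle inequality and the identity $\sum_k\sigma_k^{(i)}(|\mathbf{w}(\xi)|)=\prod_{j\neq i}(1+|\hat a(\frac{\xi+j}{m})|)$, gives the two estimates
$$\|\mathbf{v}_i\|_\infty=\delta_i(\xi)\,\beta_2(i,\xi),\qquad \|\mathbf{v}_i\|_1\leq \delta_i(\xi)\prod_{j\neq i}\bigl(1+|\hat a(\tfrac{\xi+j}{m})|\bigr).$$

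For the lower bound~\eqref{hle}, the trick is to plug a well-chosen unit vector into $\|\mathbf{H}_m^{-1}(\xi)\|_\infty\geq \|\mathbf{H}_m^{-1}(\xi)z\|_\infty/\|z\|_\infty$. I would take $z=\mathbf{V}_m^T(\mathbf{w}(\xi))\,e_i=(1,\hat a(\tfrac{\xi+i}{m}),\ldots,\hat a(\tfrac{\xi+i}{m})^{m-1})^T$; the normalization $|\hat a|\leq 1$ with $\hat a(0)=1$ makes every coordinate have modulus at most one and forces the first coordinate to equal one, so $\|z\|_\infty=1$. Since $\mathbf{V}_m^{-T}z=e_i$ and $\mathrm{diag}(\mathbf{x}(\xi))^{-1}e_i=e_i/\hat x(\tfrac{\xi+i}{m})$, one gets $\mathbf{H}_m^{-1}(\xi)z=m\,\mathbf{v}_i/\hat x(\tfrac{\xi+i}{m})$, whose infinity norm is $m\,\delta_i(\xi)\beta_2(i,\xi)/|\hat x(\tfrac{\xi+i}{m})|$. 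Maximizing over $i$ yields~\eqref{hle}.

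For the upper bound~\eqref{hue} I would use the rank-one expansion
$$\mathbf{V}_m^{-1}\,\mathrm{diag}(\mathbf{x}(\xi))^{-1}\,\mathbf{V}_m^{-T}\;=\;\sum_{i=0}^{m-1}\frac{\mathbf{v}_i\mathbf{v}_i^T}{\hat x(\tfrac{\xi+i}{m})},$$
combined with the triangle inequality and the identity $\|uv^T\|_\infty=\|u\|_\infty\|v\|_1$ for rank-one matrices. Inserting the two bounds on $\|\mathbf{v}_i\|_\infty$ and $\|\mathbf{v}_i\|_1$ recorded above and then estimating $\sum_{i=0}^{m-1}$ by $m\cdot\max_i$ produces the stated form of~\eqref{hue}.

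The main obstacle is that the naive submultiplicative chain $\|\mathbf{V}_m^{-1}\|_\infty\,\|\mathrm{diag}(\mathbf{x}(\xi))^{-1}\|_\infty\,\|\mathbf{V}_m^{-T}\|_\infty$ decouples the node index $i$ in $1/|\hat x(\frac{\xi+i}{m})|$ from the Vandermonde factors and only yields an estimate of the form $(\max_i S_i')^2\cdot\max_j 1/|\hat x(\frac{\xi+j}{m})|$ with $S_i'=\delta_i(\xi)\prod_{j\neq i}(1+|\hat a(\frac{\xi+j}{m})|)$, which is strictly weaker than~\eqref{hue}. The rank-one decomposition of $\mathbf{V}^{-1}D_x^{-1}\mathbf{V}^{-T}$ is what keeps the $i$-dependence coupled under a single maximum; for the lower bound, the parallel cleverness is that testing against a column of $\mathbf{V}_m^T$ annihilates the outer Vandermonde factor and isolates one $\mathbf{v}_i/\hat x(\frac{\xi+i}{m})$ at a time.
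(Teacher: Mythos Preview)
Your lower-bound argument is exactly the paper's: it too tests $\mathbf H_m^{-1}(\xi)$ against the columns $w_i=\mathbf V_m^T(\mathbf w(\xi))e_i$, observes $\|w_i\|_\infty=1$ from $|\hat a|\le 1$, and reads off $m\|\mathbf v_i\|_\infty/|\hat x(\tfrac{\xi+i}{m})|$ via the explicit Lagrange-coefficient formula for $\mathbf V_m^{-1}$.

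For the upper bound the paper does \emph{not} use your rank-one expansion; it simply applies the submultiplicative chain
\[
\|\mathbf H_m^{-1}(\xi)\|_\infty\le m\,\|\mathbf V_m^{-1}(\xi)\|_\infty\,\|(\mathbf V_m^{-1}(\xi))^T\|_\infty\,\|\mathrm{diag}(\mathbf x(\xi))^{-1}\|_\infty
\]
together with Gautschi's Vandermonde bound \cite{GW75}. You are right that this decouples the index: what the paper's argument actually yields is $m\,(\max_i S_i')^2\cdot\max_j|\hat x(\tfrac{\xi+j}{m})|^{-1}$ with $S_i'=\delta_i(\xi)\prod_{j\ne i}(1+|\hat a(\tfrac{\xi+j}{m})|)$, so the coupled single-$\max_i$ form displayed in \eqref{hue} is a slight over-statement relative to the proof given. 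Your rank-one route $\sum_i \mathbf v_i\mathbf v_i^T/\hat x(\tfrac{\xi+i}{m})$ does keep the $i$-dependence under one maximum, but the final step $\sum_{i=0}^{m-1}\le m\max_i$ introduces an extra factor of $m$, so you land at $m^2\max_i (S_i')^2/|\hat x(\tfrac{\xi+i}{m})|$ rather than \eqref{hue} as written. In short: the paper's approach is the simpler submultiplicative one and gives the decoupled estimate; your approach is a genuine alternative that buys the coupled form at the price of one more power of $m$. Neither argument literally delivers \eqref{hue} with both the single $m$ and the coupled index.
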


\begin{proof}
Firstly, we prove the lower bound for $\lvert\lvert \mathbf{H}_m^{-1}(\xi) \rvert\rvert_{\infty}$. Denote the Vandermonde matrix $\mathbf{V}_m(\mathbf{w}(\xi))$ by the  abbreviated $\mathbf{V}_m(\xi)$. Suppose $\mathbf{V}_m^{-1}(\xi)=(v_{ki})_{1\leq k,i\leq m}$ is the inverse of $\mathbf{V}_m(\xi)$, by the inverse formula for a standard Vandermonde matrix,
$$v_{ki}=(-1)^{m-k} \sigma_{m-k}^{(i-1)}(\xi) \delta_{i-1}(\xi).$$
Let $\{e_i\}_{i=1}^{m}$ be the standard basis for $\mathbb{C}^m$ and $w_i(\xi)=\mathbf{V}_m^T(\xi)e_i$ for $i=1,\cdots,m$. Since $\lvert \hat a(\xi) \rvert \leq 1$, we conclude that $\lvert\lvert w_i \rvert\rvert_{\infty} =1$.  
\begin{equation}
\label{lwes}
\begin{split}
\lvert\lvert \mathbf{H}_m^{-1}(\xi)\rvert\rvert_{\infty} &\geq \max_{i=1,\cdots,m} \lvert\lvert \mathbf{H}_m^{-1}(\xi) w_i(\xi) \rvert\rvert_{\infty} \\ &\geq m\cdot \max_{i=1,\cdots,m} \frac{\lvert\lvert \mathbf{V}_m^{-1}(\xi) e_i \rvert\rvert_{\infty}}{\lvert \hat x (\frac{\xi+i}{m})\rvert}  \\\
& = m\cdot \max_{i=0,\cdots,m-1 }  \frac{ \beta_2(i,\xi)\delta_i(\xi)}{\lvert \hat x (\frac{\xi+i}{m})\rvert}.
\end{split}
\end{equation}
On the other hand, using the factorization \eqref{decompose} and the upper bound norm estimation for the inverse of a Vandermonde matrix in \cite{GW75}, we show that 
\begin{equation}
\label{hmupper}
\begin{split}
\lvert \lvert \mathbf{H}_m^{-1}(\xi) \rvert\rvert_{\infty} & \leq m\lvert \lvert \mathbf{V}_m^{-1}(\xi) \rvert\rvert_{\infty} \lvert \lvert ((\mathbf{V}_m^{-1})^{T}(\xi)) \rvert\rvert_{\infty} \lvert \lvert diag^{-1}(\mathbf{x}(\xi)) \rvert\rvert_{\infty}\\
& \leq  m \max_{i=0,\cdots,m-1} \frac{(\delta_i(\xi) \prod\limits_{j \neq i \atop 0\leq j\leq m-1} (1+ \lvert \hat a(\frac{\xi+j}{m}) \rvert))^2 }{\lvert \hat x (\frac{\xi+i}{m})\rvert}. 
\end{split}
\end{equation}
\end{proof}

As an application of Proposition $\ref{theorem3}$, the following theorem sheds some light on the dependence of $\lvert \lvert \mathbf{H}_m^{-1}(\xi) \rvert\rvert_{\infty}$ on m.

\begin{theorem}\label{cor2} If $\lvert \hat x (\xi) \rvert \leq  M$ for every $\xi \in \mathbb{T}$, then $\lvert \lvert \mathbf{H}_m^{-1}(\xi) \rvert\rvert_{\infty} \geq O(2^m)$. Therefore, $\lvert \lvert \mathbf{H}_m^{-1}(\xi) \rvert\rvert_{\infty}\rightarrow \infty$ as $m \rightarrow \infty$.
\end{theorem}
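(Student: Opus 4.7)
The plan is to apply the lower bound \eqref{hle} from Proposition \ref{theorem3} and then invoke a classical potential-theoretic fact to show that the $m$ points $w_i := \hat{a}((\xi+i)/m)$ in $[-1,1]$ cannot simultaneously be well-separated as $m$ grows.

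First I would reduce the statement to a claim about $\max_i \delta_i(\xi)$ alone. Since $\sigma_0^{(i)}(\xi) = 1$, we always have $\beta_2(i,\xi) \geq 1$. Combining this with the hypothesis $|\hat{x}(\cdot)| \leq M$ and \eqref{hle} gives
$$\|\mathbf{H}_m^{-1}(\xi)\|_{\infty} \geq \frac{m}{M} \max_{0 \leq i \leq m-1} \delta_i(\xi) = \frac{m}{M} \max_i \prod_{j \neq i}\frac{1}{|w_i - w_j|},$$
so it suffices to show that $\max_i \delta_i(\xi)$ grows at least like $2^m$. Since $\hat{a}$ is real-valued with $|\hat{a}| \leq 1$, the nodes $w_0,\ldots,w_{m-1}$ all lie in $[-1,1]$. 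Combining the identity
$$\prod_{i=0}^{m-1}\prod_{j \neq i}|w_i-w_j| = \left(\prod_{0\leq i<j\leq m-1}|w_i-w_j|\right)^{2}$$
with the elementary inequality $\min_i x_i \leq (\prod_i x_i)^{1/m}$ (valid for positive $x_i$) then yields
$$\min_i \prod_{j\neq i}|w_i-w_j| \leq \left(\prod_{i<j}|w_i-w_j|\right)^{2/m}.$$

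Next I would invoke the classical Fekete/capacity bound for the transfinite diameter of $[-1,1]$: for any $m$ distinct points in $[-1,1]$, $\prod_{i<j}|w_i-w_j| \leq M_m$, where $M_m^{2/(m(m-1))} \to \mathrm{cap}([-1,1]) = 1/2$ as $m \to \infty$. Consequently $M_m^{2/m} = 2^{-(m-1)}\, e^{o(m)}$, so $\min_i \prod_{j\neq i}|w_i-w_j| \leq 2^{-(m-1)}\,e^{o(m)}$, and therefore $\max_i\delta_i(\xi) \geq 2^{m-1}\,e^{-o(m)}$. Combining yields
$$\|\mathbf{H}_m^{-1}(\xi)\|_{\infty} \geq \frac{m}{M}\cdot 2^{m-1}\,e^{-o(m)},$$
which dominates $2^m$ in the asymptotic sense and in particular tends to $\infty$ as $m \to \infty$.

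The main obstacle is the appeal to the Fekete/capacity bound, which is standard in potential theory but not elementary. An alternative would be to fix a reference configuration (e.g.\ Chebyshev-Lobatto nodes) for which $\prod_{i<j}|x_i-x_j|$ admits an explicit closed-form evaluation, and use it in place of $M_m$ to get fully effective constants. Note that the specific structure $w_i = \hat{a}((\xi+i)/m)$ plays no direct role in the argument; only the crude containment $\{w_i\} \subset [-1,1]$ is used, so the divergence of $\|\mathbf{H}_m^{-1}(\xi)\|_{\infty}$ reflects an intrinsic obstruction of the geometry rather than a feature of the filter $a$ or the initial state $x$.
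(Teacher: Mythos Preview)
Your argument follows essentially the same route as the paper's proof: reduce via \eqref{hle} and $\beta_2(i,\xi)\geq 1$ to bounding $\max_i\delta_i(\xi)$, then use $\min_i x_i \leq (\prod_i x_i)^{1/m}$ to reduce to an upper bound on $\prod_{i<j}|w_i-w_j|$ for $m$ points in $[-1,1]$. The only difference is how this last bound is justified. The paper invokes precisely the ``alternative'' you mention: it uses that the Chebyshev nodes maximize the Vandermonde determinant on $[-1,1]$ together with the explicit formula $|\det \mathbf{V}_m|^2 \leq m^m/2^{(m-1)^2}$, giving $m\cdot\max_i\delta_i(\xi)\geq 2^{(m-1)^2/m}\sim \tfrac14\,2^m$ and hence a clean $\Omega(2^m)$ bound. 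Your primary route through the transfinite diameter is correct in spirit but, as stated, yields only $2^{m-1}e^{-o(m)}$, which need not dominate $c\cdot 2^m$ without a quantitative convergence rate for $M_m^{2/(m(m-1))}\to 1/2$; swapping in the Chebyshev evaluation (as you yourself suggest) closes this gap and recovers exactly the paper's estimate.
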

\begin{proof}
We show this by proving $m \cdot \max\limits_{i=0,\cdots,m-1} \delta_i(\xi) \geq O(2^m)$.  Note $\beta_2(i,x)\geq \lvert \sigma_0^{(i)}(\xi)\rvert =1$. By \eqref{lwes}, 
\begin{equation}\label{mev}
\lvert\lvert \mathbf{H}_m^{-1}(\xi) \rvert\rvert_{\infty}  \geq m \cdot \frac{ \max\limits_{i=0,\cdots,m-1} \delta_i(\xi) }{M}=O(2^m),
\end{equation}
the conclusion follows. 
Let $c(\xi)= \max\limits_{i=0,\cdots,m-1} \delta_i(\xi)$. Note that 
\begin{equation}
\label{maxestimation}
\begin{split}
\frac{1}{c(\xi)^m} \leq \prod\limits_{i=0}^{m-1} \frac{1}{\delta_i(\xi)}&=\prod\limits_{0\leq i<j\leq m-1} \lvert  \hat a(\frac{\xi+i}{m})-\hat a(\frac{\xi+j}{m})\rvert^2\\ &= \lvert det(\mathbf{V}_m(\xi) \rvert^2. 
\end{split}
\end{equation}
Since every entry of $\mathbf{w}(\xi)$ is contained in $[-1,1]$, the Chebyshev points on $[-1,1]$ maximize the determinant of Vandermonde matrix, see \cite{SDE14}. Therefore,  by the formula for the determinant of a Vandermonde matrix on the Chebyshev points in \cite{AGP98},  we get 
$$ \lvert \det(\mathbf{V}_m(\xi))\rvert^2 \leq \frac{m^m}{2^{(m-1)^2}}. $$
By $\eqref{maxestimation}$,
$$ c(\xi) \geq \frac{2^{\frac{(m-1)^2}{m}}} {m}$$

which implies that  $m\cdot c(\xi) \geq O(2^m).$
Hence by $\eqref{mev}$
$$ \lvert\lvert \mathbf{H}_m^{-1}(\xi) \rvert\rvert_{\infty} \geq O(2^m) \rightarrow \infty, m\rightarrow \infty. $$ 
\end{proof}
\begin{remark} By our proof, we also see  that  $\lvert \lvert \mathbf{H}_m^{-1}(\xi) \rvert\rvert_{\infty}$ grows at least geometrically when m increases. 
\end{remark}
Summarizing, our results in this section suggest that 
\begin{enumerate}

\item For $0\leq k \leq m-1$, the accuracy of recovering the node $\hat a(\frac{\xi+k}{m})$ not only depends on its separation with other nodes $\delta_k(\xi)$(see Definition \ref{def6}), but also depends on the global minimal  separation $\delta(\xi)=\max\limits_{k=0,\cdots,m-1} \delta_k(\xi)$ among the nodes. Fix $m,x$, our estimations \eqref{esroots} and \eqref{hmupper} suggest that error  $\lvert \Delta_k(\xi) \rvert= \lvert \hat{\tilde{a}}(\frac{\xi+k}{m})-\hat a(\frac{\xi+k}{m})\rvert$ in the worst possible case could be proportional to $\delta_k(\xi)\delta^2(\xi)$. Our numerical experiment suggests this is sharp, see Figure 2 (c) and (d). 
\item  The accuracy of recovering all nodes is inversely proportional to the lowest magnitude of $\{\hat x(\frac{\xi+i}{m}): i=0,\cdots,m-1\}$.
\item Increasing $m$ may result in amplifying the error caused by the noise significantly. Since by the proof of Theorem $\ref{cor2}$, $ \lvert\lvert \mathbf{H}_m^{-1} \rvert\rvert_{\infty}$ grows at least geometrically when $m$ increases. Thus,  when m increases, the infinity norm of $ \mathbf{H}^{-1}_m(\xi)$ gets bigger and our solutions become more likely less robust to noise, see Figure 2 (a) and (b).
\end{enumerate}

\section{Numerical Experiment}
\label{ne}
In this section, we provide some simple numerical simulations to verify some theoretical accuracy estimations in section \ref{stabilityanalysis}. 

\subsection{Experiment Setup}
Suppose our filter $a$ is around the center of radius 3. For example, Let $a=(\cdots 0, 0.05, 0.4,0.1, 0.4, 0.05,0,\cdots)$ such that $\hat a(\xi)=0.1+0.8\cos(2 \pi \xi)+0.1\cos(4\pi\xi)$, $x=(\cdots,0, 0.242, 0.383, 0.242, 0, \cdots)$ such that  $\hat x(\xi)=0.383+0.484\cos(2\pi\xi).$ We choose $m=3$.
\begin{enumerate}
 \item  In this experiment, we choose 9 points $[\xi_1,\cdots,\xi_9]=0.49:0.001:0.498$ and calculate $\hat y_l(\xi_i)$ and  the perturbed $\widehat{\tilde{y}_l}(\xi_i)=\hat y_l(\xi_i)+\epsilon_l$ for $l=0,\cdots,5$, where $y_l$ is defined as in $\eqref{eq8.4}$ and $\epsilon_l \sim 10^{-10}$. 

\item Use Algorithm \ref{algo:prony} to calculate the roots of $q^{\xi}[z]$ and the perturbed roots of $\tilde{q}^{\xi}[z]$ respectively, then compute $\lvert \Delta_k(\xi_i) \rvert= \lvert \tilde{\hat{a}}(\frac{\xi_i+k}{m})-\hat a(\frac{\xi_i+k}{m})\rvert$ for $k=0,1,2.$

\item  Choose $\xi=0.3$ and $m=2:1:7$, we compute $\lvert\lvert \mathbf{H}^{-1}_m(0.3)\rvert\rvert_{\infty}$  for different m. 
\end{enumerate}

\subsection{Experiment Results}
\vspace{1mm}
In this subsection, we plot several figures to reflect the experiment results. The $x$-axis of the Figure 2 (a)$-$(e) are set to be 1:9, which represent $\xi_1,\cdots,\xi_9.$
\begin{enumerate}

\item \textbf{The dependence of $\max\limits_{k}\lvert \Delta_k(\xi) \rvert$ on the infinity norm of $\mathbf{H_m^{-1}(\xi)}$}.
Since the points $\xi_1,\cdots,\xi_9$ are more and more closer to $\frac{1}{2}$, we expect the infinity norm of $\mathbf{H}^{-1}_m(\xi)$ to get sufficiently larger and larger.  Note that $m$ and $x$ are fixed, the quantity $\mathbf{H}^{-1}_m(\xi)$ is the only significantly large item in the error estimations. We plot the value of $\lvert\lvert \mathbf{H}^{-1}_m(\xi_i)\rvert\rvert_{\infty}$  and $\max\limits_{k}\lvert \Delta_k(\xi_i) \rvert$ for $i=1,\cdots,9$ in Figure 2 (a) and (b). They exhibit almost the same behaviour and grows proportionally.  This indicates that the bigger $\lvert\lvert \mathbf{H}^{-1}_m(\xi_i)\rvert\rvert_{\infty}$ is, the bigger  $\max\limits_{k}\lvert \Delta_k(\xi_i) \rvert$ is.

\item \textbf{Sharpness of estimation \eqref{estimation2} and \eqref{hue}}. Our estimation \eqref{estimation2} and \eqref{hue} suggest that error  $\lvert \Delta_k(\xi) \rvert$ in the worst possible case could be proportional to $\delta_k(\xi)\delta^2(\xi)$. We plot the value of $\lvert \Delta_2(\xi_i) \rvert$ and $\delta_2(\xi)\delta^2(\xi)$
for $i=1,\cdots,9$ in Figure 2 (c) and (d). It is indicated that $\Delta_2(\xi_i)$ grows approximately proportionally to the growth of $\delta_2(\xi_i)\delta^2(\xi_i)$, which suggests the sharpness of estimation\eqref{estimation2} and \eqref{hue}. It is worthy to mention that the curve of $\max\limits_{k}\lvert \Delta_k(\xi_i) \rvert$ coincides with the curve of $\lvert \Delta_2(\xi_i) \rvert$, and the curve of $\max\limits_k \delta_k(\xi_i)\delta^2(\xi_i)$ coincides with the curve of $\delta_2(\xi_i)\delta^2(\xi_i)$. Since in this experiment, $m$ and $x$ are fixed, this also suggests that the quantity $\delta_k(\xi_i)\delta^2(\xi_i)$ essentially decides the accuracy. The bigger the quantity is, the less accuracy the Algorithm is.

\item \textbf{The infinity norm of $\mathbf{H}_m^{-1}(\xi)$}. Recall in this experiment, we choose $m=2,3,\cdots,6,7$ and $\xi=0.3$. We  plot the value of $\lvert\lvert \mathbf{H}_m^{-1}(0.3) \rvert\rvert_{\infty}$ for different $m$. The results are presented in Figure 2 (f).  The $y-$axis is set to be logarithmic. It is shown that $\lvert\lvert \mathbf{H}_m^{-1}(\xi) \rvert\rvert_{\infty}$ grows geometrically. 

\end{enumerate}

\begin{figure}[htbp]
\label{fig1}

  \centering
    \includegraphics[width=\textwidth]{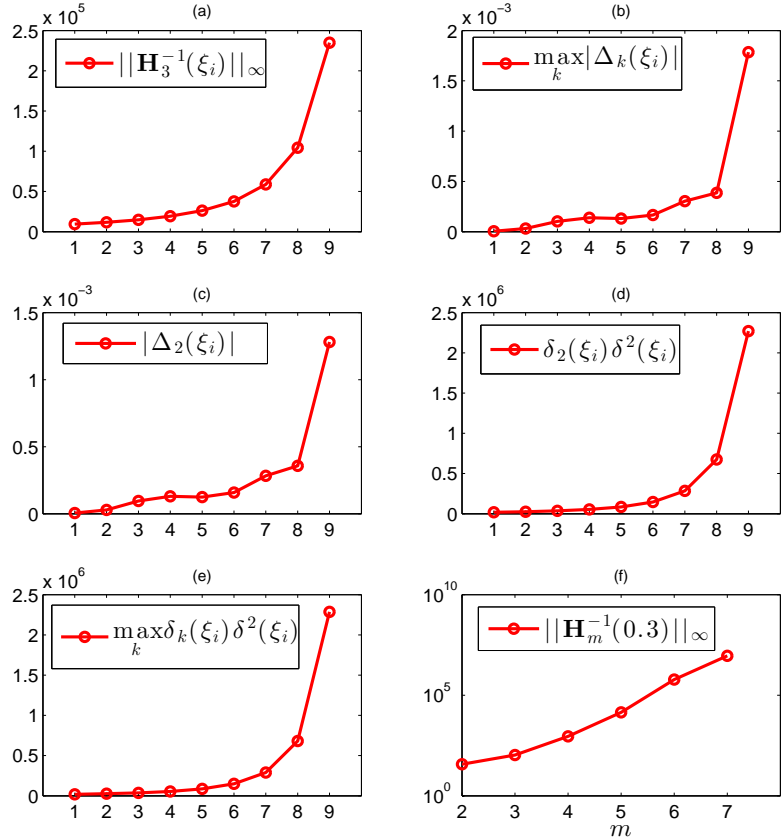}
  \caption{Experiment Results}
\end{figure}

\section{Other Numerical mehtods}
\label{othernumericalmethods}


In the following subsections,  we will investigate the data structure of the Hankel matrix built from the spatiotemporal samples and present two  algorithms based on the classical matrix pencil method and ESPRIT estimation method to our case. These two classical methods are well known for their better numerical stability than the original Prony method.

\subsection{Generalized Matrix Pencil Method}

Let $L$ and $N$ be two integers satisfying $L \geq m$ and $N \geq L+m$. Similarly, we define the $(N-L)\times 1$ column vector 
$$\mathbf{h}_{t}(\xi)= [\hat {y}_t(\xi), \hat {y}_{t+1}(\xi),\cdots, \hat {y}_{N-L+t-1}(\xi)  ]^{T} ,$$
and form the rectangular Hankel matrices 
\begin{equation}\label{hankelM1}
\mathbf{H}_{N-L,L+1}^{\xi}=\big[ \mathbf{h}_0(\xi), \mathbf{h}_1(\xi), \cdots, \mathbf{h}_{L}(\xi) \big],
\end{equation}
\[ \mathbf{H}_{N-L,L}^\xi(s)=\mathbf{H}_{N-L,L+1}^{\xi}(1:N-L, s+1:L+s), s=0, 1. \]
Similar to the case $L=m$, for $s=0,1,$
\begin{equation}\label{fac2}
\mathbf{H}_{N-L,L}^\xi(s)=V_{m,N-L}(\mathbf{w}(\xi))^T diag(\mathbf{x}(\xi))diag(\mathbf{w}(\xi))^sV_{m,L}(\mathbf{w}(\xi)).
\end{equation}

Recall that the superscripts ``$*$'' and ``$+$" will denote the conjugate transpose and the pseudoinverse. The following Lemma provides a foundation with the Generalized Matrix Pencil method.

\begin{lemma}
Let $N,L$ be two postive integers s.t. $ m \leq L \leq N-m$. Assume $\xi \neq 0,\frac{1}{2}$ and $diag(\mathbf{x}(\xi))$ is invertible. The solutions to the generalized singular eigenvalue problem :
\begin{equation}\label{matrixpencil}
(z\mathbf{H}_{N-L,L}^\xi(0)-\mathbf{H}_{N-L,L}^\xi(1))\mathbf{p}(\xi)=0
\end{equation}
subject to $\mathbf{p}(\xi) \in \mathbf{R}(\mathbf{H^{*}}_{N-L,L}^\xi(0))$,  which denotes the column space of $\mathbf{H^{*}}_{N-L,L}^\xi(0)$ are
\[ z_i=\hat a (\frac{\xi+i-1}{m})\]
\[\mathbf{p}(\xi)=\mathbf{p}_i(\xi)=\text{ i-th column of } \mathbf{V}_{m,L}^+(\mathbf{w}(\xi))\]
for $i=1,\cdots,m.$
\end{lemma}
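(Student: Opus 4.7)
The plan is to exploit the factorization \eqref{fac2} to reduce the generalized eigenvalue problem to an ordinary eigenvalue problem for the diagonal matrix $\mathrm{diag}(\mathbf{w}(\xi))$, and then use the pseudoinverse to pin down the unique solution satisfying the range constraint.

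First, I would abbreviate $V_L = \mathbf{V}_{m,L}(\mathbf{w}(\xi))$, $V_{N-L}=\mathbf{V}_{m,N-L}(\mathbf{w}(\xi))$, $D=\mathrm{diag}(\mathbf{x}(\xi))$, and $W=\mathrm{diag}(\mathbf{w}(\xi))$, so that by \eqref{fac2},
\begin{equation*}
\mathbf{H}^{\xi}_{N-L,L}(0)=V_{N-L}^{T}\,D\,V_L,\qquad \mathbf{H}^{\xi}_{N-L,L}(1)=V_{N-L}^{T}\,D\,W\,V_L.
\end{equation*}
Substituting these into the pencil equation \eqref{matrixpencil} gives
\begin{equation*}
V_{N-L}^{T}\,D\,(zI-W)\,V_L\,\mathbf{p}(\xi)=0.
\end{equation*}
Since $\xi\neq 0,\tfrac{1}{2}$, the symmetry/monotonicity of $\hat a$ on $[0,\tfrac{1}{2}]$ forces the $m$ nodes $\{\hat a(\tfrac{\xi+i}{m})\}_{i=0}^{m-1}$ to be distinct, so both $V_L$ and $V_{N-L}$ have full row rank $m$ (using $L,N-L\geq m$). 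Together with invertibility of $D$, the matrix $V_{N-L}^{T}D$ is injective on $\mathbb{C}^m$, so the pencil equation is equivalent to
\begin{equation*}
(zI-W)\,V_L\,\mathbf{p}(\xi)=0.
\end{equation*}

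Next I would analyze this reduced equation. Since $W$ is diagonal with $m$ distinct entries, nontrivial solutions require $z=w_i=\hat a(\tfrac{\xi+i-1}{m})$ for some $i\in\{1,\dots,m\}$, and in that case $V_L\,\mathbf{p}(\xi)$ must lie in the one-dimensional eigenspace $\mathrm{span}\{e_i\}$. So $V_L\mathbf{p}(\xi)=c\,e_i$ for some scalar $c$; normalizing to $c=1$ determines $V_L\mathbf{p}(\xi)$ uniquely.

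Finally, I would impose the range constraint to select $\mathbf{p}(\xi)$ from the $(L-m)$-dimensional affine preimage $V_L^{-1}(e_i)$. Taking conjugate transposes,
\begin{equation*}
\mathbf{R}\bigl(\mathbf{H}^{*\,\xi}_{N-L,L}(0)\bigr)=\mathbf{R}\bigl(V_L^{*}D^{*}V_{N-L}\bigr)=\mathbf{R}(V_L^{*}),
\end{equation*}
because $D^{*}V_{N-L}$ has full row rank $m$. The standard characterization of the Moore--Penrose pseudoinverse says that among all solutions of $V_L\mathbf{p}=e_i$, the unique one lying in $\mathbf{R}(V_L^{*})$ is $\mathbf{p}=V_L^{+}e_i$, i.e.\ the $i$-th column of $\mathbf{V}_{m,L}^{+}(\mathbf{w}(\xi))$. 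This gives the claimed pairs $(z_i,\mathbf{p}_i(\xi))$ for $i=1,\dots,m$ and shows there are no others.

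The main obstacle I anticipate is purely bookkeeping: verifying the rank/range identities cleanly (that $V_{N-L}^{T}D$ kills nothing on the left, and that $\mathbf{R}(\mathbf{H}^{*})$ collapses to $\mathbf{R}(V_L^{*})$), after which the pseudoinverse characterization produces $\mathbf{p}_i(\xi)$ immediately. No deep new ideas are needed beyond the factorization already proved in Proposition \ref{hankelrecovery}(1) and its analogue \eqref{fac2}.
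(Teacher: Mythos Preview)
Your proposal is correct and follows precisely the route the paper indicates: the paper's proof consists only of the sentence ``The proof can be done by the factorization \eqref{fac2} and a similar manner with the proof of Theorem 2 in \cite{Hua},'' and your argument is exactly a self-contained execution of that plan---substitute the Vandermonde factorization, strip off the injective left factor $V_{N-L}^{T}D$, read off the diagonal eigenstructure, and invoke the pseudoinverse characterization to isolate the eigenvector in $\mathbf{R}(V_L^{*})$. The only remark worth adding is that the eigenvector is of course determined only up to scalar multiple, so your normalization $c=1$ simply selects the representative stated in the lemma.
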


\begin{proof} The proof can be done by the factorization $\eqref{fac2}$ and a similar manner with the proof of Theorem 2 in \cite{Hua}.
\end{proof}

\begin{proposition}\label{mtx}
Let $N,L$ be two postive integers s.t. $ m \leq L \leq N-m$. Assume $\xi \neq 0,\frac{1}{2}$ and $diag(\mathbf{x}(\xi))$ is invertible. The $L \times L$ matrix  $\mathbf{H^+}_{N-L,L}^\xi(0)\mathbf{H}_{N-L,L}^\xi(1)$ has  $\{\hat a(\frac{\xi+i}{m}), i=0, \cdots,m-1\}$ and $L-m$ zeros as eigenvalues.
\end{proposition}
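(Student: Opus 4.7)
\bigskip

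\noindent\textbf{Proof proposal for Proposition \ref{mtx}.}

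The plan is to mimic the strategy already used in the Prony-method analysis: push everything through the factorization \eqref{fac2} and reduce the spectral question to a Vandermonde-similarity problem whose eigenstructure is transparent. Write $V_1 = \mathbf{V}_{m,N-L}^T(\mathbf{w}(\xi))$, $V_2 = \mathbf{V}_{m,L}(\mathbf{w}(\xi))$, $D = \mathrm{diag}(\mathbf{x}(\xi))$, and $W = \mathrm{diag}(\mathbf{w}(\xi))$, so that
\begin{equation*}
\mathbf{H}_{N-L,L}^\xi(0)=V_1 D V_2, \qquad \mathbf{H}_{N-L,L}^\xi(1)=V_1 D W V_2.
\end{equation*}
Under the hypotheses ($\xi\neq 0,\tfrac12$, $m\le L\le N-m$, $D$ invertible), the entries of $\mathbf{w}(\xi)$ are pairwise distinct, so $V_1$ has full column rank $m$ and $V_2$ has full row rank $m$.

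The first main step is to justify the formula $\mathbf{H}_{N-L,L}^\xi(0)^+ = V_2^+ D^{-1} V_1^+$. Since $V_1$ has full column rank and $DV_2$ has full row rank, the product rule $(AB)^+ = B^+ A^+$ applies; I would verify the four Moore--Penrose conditions directly using $V_1^+ V_1 = I_m$ and $V_2 V_2^+ = I_m$. Multiplying gives immediately
\begin{equation*}
\mathbf{H}_{N-L,L}^\xi(0)^+ \mathbf{H}_{N-L,L}^\xi(1) = V_2^+ D^{-1} V_1^+ V_1 D W V_2 = V_2^+ W V_2 =: M.
\end{equation*}

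The second step is to read off the spectrum of the $L\times L$ matrix $M = V_2^+ W V_2$. Setting $u_i = V_2^+ e_i$ for $i=1,\ldots,m$, the identity $V_2 V_2^+ = I_m$ yields
\begin{equation*}
M u_i = V_2^+ W V_2 V_2^+ e_i = V_2^+ W e_i = \hat{a}\!\left(\tfrac{\xi + i - 1}{m}\right) u_i,
\end{equation*}
so each $u_i$ is an eigenvector of $M$ with eigenvalue $\hat{a}(\tfrac{\xi+i-1}{m})$. The vectors $u_1,\ldots,u_m$ are linearly independent because $V_2^+$ has rank $m$ (its columns are sent to the standard basis of $\mathbb{C}^m$ by $V_2$). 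Finally, $M$ has rank at most $m$ since it factors through $\mathbb{C}^m$, so $\dim\ker M \ge L-m$, producing the remaining $L-m$ zero eigenvalues. Together these account for all $L$ eigenvalues (with multiplicity).

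The only non-routine part is the pseudoinverse identity, which requires that one of the factors have full column rank and the other full row rank; everything else is bookkeeping from the factorization \eqref{fac2}. I expect no further obstacle, because the hypotheses of the proposition are exactly what is needed to guarantee the two rank conditions and the invertibility of $D$.
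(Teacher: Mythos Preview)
Your argument is correct. It differs from the paper's proof mainly in organization: the paper does not compute $\mathbf{H}_{N-L,L}^{\xi}(0)^{+}\mathbf{H}_{N-L,L}^{\xi}(1)$ explicitly, but instead invokes the preceding Lemma on the generalized pencil $(z\mathbf{H}_{N-L,L}^{\xi}(0)-\mathbf{H}_{N-L,L}^{\xi}(1))\mathbf{p}=0$, left-multiplies by $\mathbf{H}_{N-L,L}^{\xi}(0)^{+}$, and uses that $\mathbf{H}^{+}(0)\mathbf{H}(0)$ is the orthogonal projection onto $\mathbf{R}(\mathbf{H}^{*}(0))$, together with the fact that the pencil eigenvectors $\mathbf{p}_i$ (the columns of $V_2^{+}$) lie in that range. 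Your route bypasses the pencil lemma entirely: using the product rule $(V_1 D V_2)^{+}=V_2^{+}D^{-1}V_1^{+}$ you obtain the closed form $\mathbf{H}^{+}(0)\mathbf{H}(1)=V_2^{+}WV_2$, from which the eigenpairs $(\hat a(\tfrac{\xi+i-1}{m}),\,V_2^{+}e_i)$ and the $(L-m)$-dimensional kernel (namely $\ker V_2$) are read off immediately. The two proofs land on the same eigenvectors, but yours is more self-contained and yields the explicit similarity-type expression $V_2^{+}WV_2$, while the paper's version emphasizes the connection to the matrix-pencil formulation in \cite{Hua}. One small addition that would make your write-up airtight: note that the $u_i$ span $\mathrm{range}(V_2^{+})=\mathrm{range}(V_2^{*})$, which is complementary to $\ker V_2$, so the $m$ eigenvectors $u_i$ together with any basis of $\ker V_2$ give a full eigenbasis of $\mathbb{C}^L$, accounting for all $L$ eigenvalues with multiplicity.
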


\begin{proof}
Left multiplying \eqref{matrixpencil} by $\mathbf{H^+}_{N-L,L}^\xi(0)$, we have
\begin{equation}
\mathbf{H^+}_{N-L,L}^\xi(0)\mathbf{H}_{N-L,L}^\xi(1)\mathbf{p}_i(\xi)=z_i \mathbf{H^+}_{N-L,L}^\xi(0) \mathbf{H}_{N-L,L}^\xi(0)\mathbf{p}_i(\xi),
\end{equation}
By property of pseudoinverse, $\mathbf{H^+}_{N-L,L}^\xi(0)\mathbf{H}_{N-L,L}^\xi(0)$ is the orthogonal projection onto the $\mathbf{R}(\mathbf{H^{*}}_{N-L,L}^\xi(0))$. Since $\mathbf{p}_i(\xi) \in \mathbf{R}(\mathbf{H^{*}}_{N-L,L}^\xi(0))$, it is easy to see that  the set $\{\hat a(\frac{\xi+i}{m}): i=0,\cdots,m-1\}$ are $m$ eigenvalues of $\mathbf{H^+}_{N-L,L}^\xi(0)\mathbf{H}_{N-L,L}^\xi(1).$  Since the rank of $\mathbf{H^+}_{N-L,L}^\xi(0)\mathbf{H}_{N-L,L}^\xi(1)$ is $m \leq L$, $\mathbf{H^+}_{N-L,L}^\xi(0)\mathbf{H}_{N-L,L}^\xi(1)$ has $L-m$ zero eigenvalues.  
\end{proof}

It is immediate to see that one advantage of the matrix pencil method is the fact that there is no need to compute the coefficients of the minimal annihilating polynomial of $diag(\mathbf{w}(\xi))$. In this way, we just need to solve a standard eigenvalue problem of a square matrix  $\mathbf{H^+}_{N-L,L}^\xi(0)\mathbf{H}_{N-L,L}^\xi(1)$. In order to compute $\mathbf{H^+}_{N-L,L}^\xi(0)\mathbf{H}_{N-L,L}^\xi(1)$, inspired by idea of Algorithm 5 for SVD based Matrix Pencil Method in \cite{Hua1}, we can employ the Singular Value Decomposition(SVD) of the Hankel matrices.

\begin{lemma}
In addition to the conditions of Proposition \ref{mtx}, given the SVD of the Hankel matrix, 

$$\mathbf{H}_{N-L,L+1}^{\xi}=\mathbf{U}_{N-L}^{\xi}\mathbf{\Sigma}_{N-L,L+1}^{\xi}\mathbf{W}_{L+1}^{\xi},$$
then $$\mathbf{H^+}_{N-L,L}^\xi(0)\mathbf{H}_{N-L,L}^\xi(1)={\mathbf{W}^{\xi}}^+_{L+1}(1:m,1:L)\mathbf{W}^{\xi}_{L+1}(1:m,2:L+1).$$
\end{lemma}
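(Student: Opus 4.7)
The strategy is to exploit the fact that the big Hankel matrix $\mathbf{H}_{N-L,L+1}^{\xi}$ has rank exactly $m$, so that its SVD truncates to a rank-$m$ factorization, and then to compute the product $\mathbf{H}^+_{N-L,L}(0)\mathbf{H}_{N-L,L}(1)$ by passing the pseudoinverse through this factorization.

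First I would establish that $\mathrm{Rank}\,\mathbf{H}_{N-L,L+1}^{\xi}=m$. This follows by exactly the same factorization argument used in Proposition \ref{hankelrecovery}(2): identity \eqref{fac2} extends to $\mathbf{H}_{N-L,L+1}^{\xi}=\mathbf{V}^T_{m,N-L}(\mathbf{w}(\xi))\,\mathrm{diag}(\mathbf{x}(\xi))\,\mathbf{V}_{m,L+1}(\mathbf{w}(\xi))$, and under the hypotheses $\xi\neq 0,\tfrac12$, $\mathrm{diag}(\mathbf{x}(\xi))$ invertible, $L+1\geq m$, $N-L\geq m$, both Vandermonde factors have full rank $m$. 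Consequently, writing the SVD as $\mathbf{H}_{N-L,L+1}^{\xi}=\mathbf{U}_{N-L}^{\xi}\mathbf{\Sigma}_{N-L,L+1}^{\xi}\mathbf{W}_{L+1}^{\xi}$, only the first $m$ singular values are nonzero, and I can truncate to
\[
\mathbf{H}_{N-L,L+1}^{\xi}=\mathbf{U}_m\,\mathbf{\Sigma}_m\,\mathbf{W}_{L+1}^{\xi}(1:m,:),
\]
where $\mathbf{U}_m$ denotes the first $m$ columns of $\mathbf{U}_{N-L}^{\xi}$ and $\mathbf{\Sigma}_m$ is the invertible $m\times m$ diagonal block of nonzero singular values.

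Next I would read off the two column-subblocks of interest. Since $\mathbf{H}_{N-L,L}^\xi(s)=\mathbf{H}_{N-L,L+1}^{\xi}(:,s+1:L+s)$ for $s=0,1$, the truncated SVD immediately gives
\begin{equation*}
\mathbf{H}_{N-L,L}^\xi(0)=\mathbf{U}_m\mathbf{\Sigma}_m\,\mathbf{W}_{L+1}^{\xi}(1:m,1:L),\qquad
\mathbf{H}_{N-L,L}^\xi(1)=\mathbf{U}_m\mathbf{\Sigma}_m\,\mathbf{W}_{L+1}^{\xi}(1:m,2:L+1).
\end{equation*}
Here $\mathbf{U}_m\mathbf{\Sigma}_m$ has full column rank $m$ (orthonormal columns times an invertible diagonal), and since $\mathbf{H}_{N-L,L}^\xi(0)$ itself has rank $m$ by the factorization \eqref{fac2}, the right factor $\mathbf{W}_{L+1}^{\xi}(1:m,1:L)$ must have full row rank $m$.

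Now I would apply the standard pseudoinverse product rule: if $A$ has full column rank and $B$ has full row rank equal to the common inner dimension, then $(AB)^+=B^+A^+$. Applying this with $A=\mathbf{U}_m\mathbf{\Sigma}_m$ and $B=\mathbf{W}_{L+1}^{\xi}(1:m,1:L)$ yields
\[
\mathbf{H}^+_{N-L,L}{}^\xi(0)=\bigl(\mathbf{W}_{L+1}^{\xi}(1:m,1:L)\bigr)^{+}\,\mathbf{\Sigma}_m^{-1}\,\mathbf{U}_m^{*}.
\]
Multiplying by $\mathbf{H}_{N-L,L}^\xi(1)=\mathbf{U}_m\mathbf{\Sigma}_m\,\mathbf{W}_{L+1}^{\xi}(1:m,2:L+1)$ and using $\mathbf{U}_m^{*}\mathbf{U}_m=I_m$ together with $\mathbf{\Sigma}_m^{-1}\mathbf{\Sigma}_m=I_m$, all the ``middle'' factors collapse and the identity of the lemma drops out.

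The only non-bookkeeping step is the justification for using $(AB)^+=B^+A^+$, which is the one place where the rank information really matters; this is the step I would be most careful to spell out, since the identity fails without the matching full-rank hypotheses.
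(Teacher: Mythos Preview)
Your proof is correct and is precisely the ``direct computation'' the paper's one-line proof alludes to: it uses the key fact that $\mathbf{H}_{N-L,L+1}^{\xi}$ has only $m$ nonzero singular values, truncates the SVD accordingly, and cancels the common $\mathbf{U}_m\mathbf{\Sigma}_m$ factor via the full-rank pseudoinverse product rule. You have simply spelled out what the paper leaves implicit.
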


\begin{proof} This can be shown by direct computations and noticing that $\mathbf{H}_{N-L,L+1}^{\xi}$ has only $m$ nonzero singular values.
\end{proof}
We summarize the Generalized Matrix Pencil Method in Algorithm \ref{algo:matrixpencil}. Note that the amount of computation required by Algorithm \ref{algo:matrixpencil} depends on the free parameter $L$. Numerical experiments show that the choice of $L$  greatly affects the noise sensitivity of the eigenvalues. In terms of the noise sensitivity and computation cost, the good choice for $L$ is between one third of $N$ and two thirds of $N$ \cite{Hua1}.  In our numerical example, we choose $L$ to be around one third of $N$. 
\begin{algorithm}
  \caption{Generalized Matrix Pencil Method (Based on SVD)}\label{algo:matrixpencil}
  \begin{algorithmic}[1]
    \REQUIRE$ m \leq L \leq N-m$, $r \in \N$, $ \{y_l\}_{l=0}^{N-1}$, $\xi (\neq 0,\frac{1}{2}) \in \mathbb{T}.$
    
    \STATE Compute the Fourier transformation of the measurement sequences $\{y_l\}_{l=0}^{N-1}$ and build the Hankel matrix $\mathbf{H}_{N-L,L+1}^{\xi}$ and compute its SVD
$$\mathbf{H}_{N-L,L+1}^{\xi}=\mathbf{U}_{N-L}^{\xi}\mathbf{\Sigma}_{N-L,L+1}^{\xi}\mathbf{W}_{L+1}^{\xi}.$$

\STATE Compute the eigenvalues of ${W^{\xi}}^+_{L+1}(1:m,1:L)W^{\xi}_{L+1}(1:m, 2:L+1)$.

\STATE Delete $L-m$ smallest values in modulus (zeros in noise free case) from the eigenvalues. Order the rest eigenvalues by the monotonicity and symmetric condition of $\hat a$ to get $\{\hat a(\frac{\xi+i}{m}): i=0,\cdots,m-1\}$.\\
\ENSURE $\{\hat a(\frac{\xi+i}{m}): i=0, \cdots,m-1\}.$
\end{algorithmic}
\end{algorithm}


\subsection{Generalized ESPRIT Method}Original ESPRIT Method relies on a particular property of  Vandermonde matrices known as the rotational invariance \cite{6}. By the factorization results \eqref{fac2}, we have seen that the Hankel data matrix $\mathbf{H}_{N-L,L+1}^{\xi}$ containing successive spatiotemporal data of the evolving states  is rank deficient and that its range space, known as the signal subspace, is spanned by Vandermonde matrix generated by $\{\hat a(\frac{\xi+i}{m}), i=0, \cdots,m-1\}$. Hence we can generalize the idea and present the generalized ESPRIT algorithm based on SVD for estimating the $\{\hat a(\frac{\xi+i}{m}): i=0, \cdots,m-1\}$ in our case.  We summarize it in Algorithm \ref{algo:esprit}
\begin{algorithm}
  \caption{Generalized ESPRIT Algorithm}\label{algo:esprit}
  \begin{algorithmic}[1]
    \REQUIRE$ m \leq L\leq N-m$, $r \in \N$, $\{y_l\}_{l=0}^{N-1}$, $\xi (\neq 0,\frac{1}{2}) \in \mathbb{T}.$
    
    \STATE Compute the Fourier transformation of the measurement sequences $\{y_l\}_{l=0}^{N-1}$ and form the Hankel matrix $\mathbf{H}^{\xi}_{N-L,L+1}.$

\STATE  Compute the SVD of  $\mathbf{H}^{\xi}_{N-L,L+1}=\mathbf{U}^{\xi}_{N-L}\mathbf{\Sigma}^{\xi}_{N-L,L+1}\mathbf{W}^{\xi}_{L+1}.$

\STATE Compute the $m \times m$ spectral matrix $\mathbf{\Phi}(\xi)$ by solving the linear system

$$\mathbf{U}^{\xi}_{N-L}(1:N-L-1,1:m)\mathbf{\Phi}(\xi) =\mathbf{U}^{\xi}_{N-L}(2:N-L,1:m)$$ and estimate the eigenvalues of $\mathbf{\Phi}(\xi)$. 

\STATE Order the eigenvalues by the monotonicity and symmetric condition of $\hat a$ to get $\{\hat a(\frac{\xi+i}{m}): i=0,\cdots,m-1\}$.\\

\ENSURE $\{\hat a(\frac{\xi+i}{m}): i=0, \cdots,m-1\}.$
\end{algorithmic}
\end{algorithm}

\subsection{Data Preprocessing Using Cadzow Denoising Method} 

It has been shown in the previous sections the Hankel matrix $\mathbf{H}_{N-L,L+1}^{\xi}(m \leq L \leq N-m)$ has two key properties in the noise free case under appropriate hypothesis:
\begin{enumerate}
\item It has rank $m$.
\item It is Toeplitz.
\end{enumerate}
In the noisy case, these two properties are not initially satisfied simultaneously. $\mathbf{H}_{N-L,L+1}^{\xi}$ is very sensitive to noise, numerical experiments show that even very small noise ($\sim10^{-10}$) will change  its rank dramatically. To further improve robustness, we use an iterative method devised by Cadzow \cite{Cadzow} to preprocess the noisy data and  guarantee to build a Hankel matrix with above two key properties. In our context, it can be summarized in Algorithm \ref{algo:cadzow}.


\begin{algorithm}

  \caption{Cadzow Iterative Denoising Method}\label{algo:cadzow}
  \begin{algorithmic}[1]
    \REQUIRE$ m \leq L\leq N-m$, $\{ \hat {\tilde{y}}_l(\xi)\}_{l=0}^{N-1}$, $\xi (\neq 0,\frac{1}{2}) \in \mathbb{T}.$ 
    
  \STATE Build the Hankel matrix $\mathbf{H}^{\xi}_{N-L,L+1}$ from $\{ \hat {\tilde{y}}_l(\xi)\}_{l=0}^{N-1}$
and preform the SVD.  Let $\lambda_1,\cdots,\lambda_K$ be its singular values, $K=\min\{N-L, L+1\}.$

\STATE  Set $\epsilon$ to be a small positive number.

  \WHILE{ $\frac{\lambda_{m+1}}{\lambda_m} \geq \epsilon$}

\STATE Enforce the rank $m$ of  $\mathbf{H}^{\xi}_{N-L,L+1}$ by setting the $K-m$ smallest singular values to zero.

\STATE Enforce the Toeplitz structure on $\mathbf{H}^{\xi}_{N-L,L+1}$ by averaging the entries along the diagonals. \ENDWHILE 

\STATE Extract the denoised  Fourier  data $\{ \hat {\tilde{y}}_l(\xi)\}_{l=0}^{N-1}$ from the first column and the last row of  $\mathbf{H}^{\xi}_{N-L,L+1}$

\ENSURE  Denoised  Fourier  data $\{ \hat {\tilde{y}}_l(\xi)\}_{l=0}^{N-1}$ and Hankel matrix $\mathbf{H}^{\xi}_{N-L,L+1}.$
\end{algorithmic}
\end{algorithm}

￼The procedure of Algorithm \ref{algo:cadzow} is guaranteed to converge to a matrix which exhibits the desired two key properties \cite{Cadzow}. The iterations stop whenever the ratio of the ￼$(m+1)$-th singular value to the ￼ $m$-th one, falls below a predetermined threshold. Since  Algorithm \ref{algo:prony} does not perform well when noise is big,  we can combine the Algorithm \ref{algo:prony} and Algorithm \ref{algo:cadzow} to recover the Fourier spectrum of $a$ and improve the performance.  In our numerical example, we choose $L=m$.


\section{numerical examples}
\label{numericalexamples}
In this section, we present a numerical example to illustrate the effectiveness and robustness of the proposed Algorithms. 

\begin{example} Let the filter $$a=(\cdots, 0, 0.25, 0.5, 0.25, 0,\cdots)$$ so that $\hat a(\xi)=0.5+0.5\cos(2\pi \xi)$. $\hat a$ is approximately Gaussian on $[-\frac{1}{2}, \frac{1}{2}].$ Let the initial signal $x$ be  a conjugate symmetric vector given by $x(0)=0.75, x(1)=\bar{x}(-1)=0.8976+0.4305i,$ and $x(2)=\bar{x}(-2)=0.9856-0.1682i$ so that $\hat x(\xi)=0.75+2Re((0.9856-0.1682i)e^{-4\pi i \xi}+(0.8976-0.4305i)e^{-2\pi i \xi})$. The subsampling factor $m$ is set to be 5. Given the Fourier data of the spatiotemporal samples $\{\hat y_l\}_{l=0}^{N-1}$, we add independent uniform distributed noise $\epsilon_l \sim U(-\epsilon, \epsilon)$ to the Fourier data $\hat y_l$ for $l=0,\cdots,N-1$.  Recall that $\lvert \Delta_k(\xi)\rvert=\lvert \tilde{\hat{a}}(\frac{\xi+k}{m})-\hat a(\frac{\xi+k}{m})\rvert$, we define the relative error $$ \mathbf{e}_k(\xi) = \frac{\lvert \Delta_k(\xi)\rvert}{\max\limits_{k} {\lvert  \hat a(\frac{\xi+k}{m})\rvert }}$$
for $k=0,1,m-1$. The best case error  is set to be $\mathbf{e}_{best}(\xi)=\min\limits_{k}\mathbf{e}_k(\xi)$ and the worst case error is set to be $\mathbf{e}_{worst}(\xi)=\max\limits_{k}\mathbf{e}_k(\xi).$ Besides, we define the mean square error $$\mathbf{MSE}^2(\xi)=\frac{\sum\limits_{k=0}^{m-1} \lvert \Delta_k(\xi) \rvert^2}{\sum\limits_{k=0}^{m-1} \lvert  \hat a(\frac{\xi+k}{m}) \rvert^2}.$$   Then we apply our Algorithm \ref{algo:prony}, Algorithms \ref{algo:prony}+Algorithm \ref{algo:cadzow}, Algorithm \ref{algo:matrixpencil} and Algorithm \ref{algo:esprit} to the case when $\epsilon=0.4$.  For several parameters $N$ and $L$, the resulting errors (average over 100 experiments) are presented in Table \ref{aggiungi}.  As the bound $\epsilon$ in the algorithms we use $10−^{-10}$.  It is shown in the table that increasing the temporal samples, i.e. $N$, will help reduce the error.  The new proposed algorithms have better performance than Algorithm \ref{algo:prony}, if given more spatiotemporal data.

\begin{table}[tp]%
\caption{Numerical Results}
\label {aggiungi}\centering %

\begin{tabular}{ ccccccc}
\toprule %
\textbf{Algorithm}&$m$&$N$&$L$& $\mathbf{e}_{best}$&$\mathbf{e}_{worst}$ &$\mathbf{MSE}$\\\hline
1&&&&&&\\
& 5 &10& 5&0.4e-07&0.19e-05&0.14e-05\\
 &5&15&5& 0.27e-08& 0.24e-06&0.17e-06\\
&5&20&5 &0.85e-09&0.18e-06&0.13e-06\\
&5&25&5& 0.46e-09& 0.17e-06&0.12e-06\\ \midrule
1+4& 5&10&5& 0.38e-07& 0.18e-05&0.13e-05\\
& 5 &15&5&0.13e-08 & 0.13e-06&0.09e-06\\
& 5&20&5 & 0.15e-09&0.58e-07 &0.41e-07\\ 
&5&25&5&0.49e-10&0.42e-07&0.29e-07\\ \midrule
2& 5 &15&5& 0.17e-08&0.16e-06&012e-06\\
& 5&20& 6&0.25e-09&0.74e-07&0.53e-07\\
&5&25&8&0.69e-10&0.47e-07&0.33e-07\\ \midrule
3& 5 &15&5&0.17e-08&0.16e-06&0.11e-06\\
& 5&20& 6&0.21e-09&0.66e-07&0.46e-07\\
&5&25&8&0.62e-10&0.45e-07&0.32e-07\\ \bottomrule

\end {tabular}
\end {table}

\end{example}


\section{Conclusion}
\label{conclusion}
In this paper, we investigate the conditions under which we can recover a typical low pass convolution filter $a \in \ell^1(\Z)$ and a vector $x \in \ell^2(\Z)$ from the combined regular subsampled version of the vector  $x, \cdots, A^{N-1}x$ defined in $\eqref{eq8.4}$, where $Ax=a*x$.  We show that if one doubles the amount of temporal samples needed in \cite{ADK13} to recover the signal propagated by a known filter, one can almost surely solve the problem even if the filter is unknown. We firstly propose an algorithm based on the classical Prony method to recover the finite impulse response filter and signal, if an upper bound for their support is known.  In particular, we have done a first order perturbation analysis and the estimates are formulated in very simple geometric terms involving Fourier spectral function of $a, x $ and $m$, shedding some light on the structure of the problem. We get a lower bound estimation for infinity norm of  $\mathbf{H}^{-1}_m(\xi)$ in terms of $m.$  Then we propose several other algorithms, which can make use of  more temporal samples and increase the robustness to noise. The potential applications includes the One-Chip Sensing: sensors inside chips for accurate measurements of voltages, currents, and temperature (e.g., avoid overheating any area of the chip), sources localization of an evolving state and time-space trade off(e.g., sound field acquisition using microphones)etc.


\section{Acknowledgment}
\label{Acknowledgment}
The author would like to thank Akram Aldroubi for his endless encouragements and very insightful comments in the process of creating this work. The author is also indebted to Ilya Kristhal and Yang Wang for their helpful discussions related to this work.

\end{document}